\title{Basis Collapse for Holographic Algorithms Over All Domain Sizes}
\author{\begin{tabular}{*{3}{>{\centering}p{.31\textwidth}}}
\large Sitan Chen\thanks{Harvard University Department of Mathematics \& Harvard John A. Paulson School of Engineering and Applied Sciences} \tabularnewline
{\footnotesize \url{sitanchen@college.harvard.edu}}
\end{tabular}}
\begin{document}
\maketitle

\setcounter{thm}{0}

\begin{abstract}
\label{abstract}

The theory of holographic algorithms introduced by Valiant represents a novel approach to achieving polynomial-time algorithms for seemingly intractable counting problems via a reduction to counting planar perfect matchings and a linear change of basis. Two fundamental parameters in holographic algorithms are the \emph{domain size} and the \emph{basis size}. Roughly, the domain size is the range of colors involved in the counting problem at hand (e.g. counting graph $k$-colorings is a problem over domain size $k$), while the basis size $\ell$ captures the dimensionality of the representation of those colors. A major open problem has been: for a given $k$, what is the smallest $\ell$ for which any holographic algorithm for a problem over domain size $k$ ``collapses to'' (can be simulated by) a holographic algorithm with basis size $\ell$? Cai and Lu showed in 2008 that over domain size 2, basis size 1 suffices, opening the door to an extensive line of work on the structural theory of holographic algorithms over the Boolean domain. Cai and Fu later showed for signatures of full rank that over domain sizes 3 and 4, basis sizes 1 and 2, respectively, suffice, and they conjectured that over domain size $k$ there is a collapse to basis size $\lfloor\log_2 k\rfloor$. In this work, we resolve this conjecture in the affirmative for signatures of full rank for all $k$.
\end{abstract}

\section{Introduction}
\label{sec:intro}

\subsection{Matchgates and Holographic Algorithms}

In \cite{valiant1,valiant2}, Valiant introduced the notion of matchgate computation as a method for classically simulating certain quantum gates in polynomial time and asked whether matchgates could be used to derive other polynomial-time algorithms. Given a counting problem, one might hope that its local combinatorial constraints can be encoded by graph fragments $G_i$ so that there exists a bijection between the number of solutions to that counting problem and the number of perfect matchings of an amalgamation $\Omega$ of those graph fragments. The celebrated Fisher-Kasteleyn-Temperley algorithm \cite{fk1,temp} then computes the latter in polynomial time if $\Omega$ is planar.

The drawback of the above approach is that if one seeks a one-to-one reduction to counting perfect matchings in a planar graph, a matchgate-based solution might not necesssarily exist because the range of encodable local constraints is too limited. Valiant's insight in \cite{valiant3,valiant4} was to extend this range by looking instead for a \emph{many-to-many} reduction. In his new framework, multiple strands of computation get combined in a ``holographic'' mixture with exponential, custom-built cancellations specified by a choice of basis vectors to produce the final answer. With this change-of-basis approach, Valiant \cite{valiant3,valiant4,valiant5,valiant6} found polynomial-time solutions to a number of counting problems, minor variants of which are known to be intractable, and noted that the only criterion for their existence was the solvability of certain finite systems of polynomial equations. As a notable example, whereas counting the number of satisfying assignments to planar, read-twice, monotone 3-CNFs is $\#\P$-complete and even the same problem modulo 2 is $\NP$-hard under randomized reductions, the same problem modulo 7, known as $\#_7$Pl-Rtw-Mon-3CNF, has a polynomial-time holographic solution \cite{valiant3}.

Given the dearth of strong unconditional circuit lower bounds to support the prevailing belief that $\P\neq\NP$, we might question whether our current inability to devise polynomial-time solutions to $\NP$-complete problems is sufficient to justify such a belief. After all, it would also have justified the belief that $\#_7$Pl-Rtw-Mon-3CNF is intractable. This suggests that to arrive upon the desired separation of $\P$ and $\NP$, we need a better understanding of the possibilities of polynomial-time computation. For this reason, determining the ultimate capabilities of holographic algorithms appears to be a crucial step.

To this end, one major direction has been to study the kinds of local constraints that matchgates can encode. Cai and his collaborators initiated a systematic study of the structural theory of holographic algorithms over the Boolean domain \cite{caichoudhary1,caichoudhary2,caichoudhary3,caichoudharylu,caifu,caigoren,cailu1,cailu2,cailu3,cailu4}, compiling what amounts to a catalogue of such constructions that turns the process of finding basic holographic reductions into something essentially algorithmic. For example, as a corollary to their results in \cite{cailu1}, they noted that the modulus 7 appears in $\#_7$Pl-Rtw-Mon-3CNF because it is a Mersenne prime and that more generally, there is a polynomial-time holographic algorithm for $\#_{2^{k}-1}$Pl-Rtw-Mon-$k$CNF.

\subsection{Basis Collapse Theorems}

Another direction of study concerns better understanding the power afforded by the change of basis, specifically the relationship between two important parameters of holographic algorithms: the dimension and number of basis vectors. Because holographic algorithms reduce counting problems to counting perfect matchings, the dimension of the $k$ basis vectors specifying the abovementioned ``holographic mixture'' must be of the form $2^{\ell}$. We say that $\ell$ is the \emph{basis size} and $k$ is the domain size. The domain size should be interpreted as the range over which variables in the counting problem can take values, so for instance, problems related to counting certain $k$-colorings in a graph are problems over domain size $k$. The basis size should then be interpreted roughly as the number of bits needed to encode each of these $k$ colors.

The first holographic algorithms studied were on domain size 2, dealing with counting problems involving matchings, 2-colorings, graph bipartitions, and Boolean satisfying assignments, and used bases of size 1. The original holographic algorithm for $\#_7$Pl-Rtw-Mon-3CNF was the first to use a basis of size 2, and at the time it may have been tempting to believe that increasing basis size adds power to the holographic approach. In \cite{cailu2}, Cai and Lu showed to the contrary that on domain size 2, any nontrivial holographic algorithm with basis of size $\ell\ge 2$ can be simulated by one with basis of size 1. In other words, for all $\ell$, the class of domain size 2 problems solvable with basis size $\ell$ \emph{collapses} to the class solvable with basis size 1.

In \cite{caifu}, Cai and Fu further showed that holographic algorithms on domain sizes 3 and 4 using at least one full-rank signature collapse to basis sizes 1 and 2 respectively. They conjectured that for domain size $k$, we get a collapse to basis size $\lfloor\log_2 k\rfloor$ and suggested a heuristic explanation that for domain size $k = 2^K$, we only need $\log_2 k$ bits to encode each of the $k$ colors.

\subsection{Our Results and Techniques}

We prove Cai and Fu's conjecture in the affirmative for all domain sizes.

\begin{thm}
	Any holographic algorithm on domain size $k$ using at least one matchgate with signature of rank $k$ can be simulated by a holographic algorithm with basis size $\lfloor\log_2 k\rfloor$.
	\label{thm:main}
\end{thm}

As Cai and Fu noted in \cite{caifu}, their ``information-theoretic'' explanation for the collapse theorems on domain sizes up to 4 is insufficient to explain why holographic algorithms on domain size 3 collapse to basis size 1 and not just 2.

To prove their collapse theorem on domain size 3, Cai and Fu actually showed that the bases of holographic algorithms on domain size 3 which use at least one full-rank signature must be of rank 2 rather than 3. They then invoked the main result of Fu and Yang \cite{fuyang} which reduces holographic algorithms on domain size $k$ with bases of rank 2 to ones on domain size 2.

Our key observation is that this phenomenon occurs at a much larger scale. As a bit of informal background, the \emph{standard signature} of a matchgate $G$ is a vector encoding the perfect matching properties of $G$. By indexing appropriately, we can regard the standard signature as a matrix $\Gamma$. The entries of this matrix are known \cite{caigoren} to satisfy a collection of quadratic polynomial identities called the Matchgate Identities (MGIs), and by using these identities together with some multilinear algebra, we prove the following:

\begin{thm}[Rank Rigidity Theorem]
	The rank of the standard signature matrix $\Gamma$ is always a power of two.\label{thm:rigid_informal}
\end{thm}

We can then conclude that the basis of a nontrivial holographic algorithm on domain size $k$ must be of rank $2^{\ell}$, where $\ell$ is the largest integer for which $2^{\ell}\le k$. With this step, together with a generalization of Fu and Yang's result to bases of rank $k$, we show it is enough to prove a collapse theorem for holographic algorithms on domain sizes that are powers of two. Cai and Fu \cite{caifu} achieved such a collapse theorem for domain size 4 by proving that 1) any standard signature of rank 4 contains a full-rank $4\times 4$ submatrix whose entries have indices are ``close'' in Hamming distance, 2) full-rank $4\times 2^{2n-2}$ standard signatures have right inverses that are also standard signatures.

For 1), the proof in \cite{caifu} used algebraic techniques involving the matchgate identities, but these methods seem to work only for domain size 4. We instead show that the required generalization of 1) to arbitrary domain sizes almost trivially follows from the rank rigidity theorem and the MGIs. Roughly, we prove the following:

\begin{thm}[Cluster Existence - informal]
		If $\Gamma$ is a $2^{\ell}\times 2^{(n-1)\ell}$ matrix of rank at least $k$ realizable as the standard signature of some matchgate, then there exists a 
		$2^{\lceil\log_2 k\rceil}\times 2^{\lceil\log_2 k\rceil}$ submatrix of full rank in $\Gamma$ whose column (resp. row) indices differ in at most $\lceil\log_2 k\rceil$ bits.
		\label{thm:sub_informal}
\end{thm}

For 2), the proof in \cite{caifu} nonconstructively verifies that the set of all invertible $4\times 4$ matrices satisfying the matchgate identities up to sign forms a group under multiplication. $4\times 4$ matrices are easy to handle because there is only one nontrivial MGI in this case. Rather than generalizing this approach, we note that Li and Xia \cite{groupprop} proved a very similar but more general result under a different framework of matchgate computation known as character theory, showing that the set of all invertible $2^K\times 2^K$ matrices realizable as matchgate characters forms a group under multiplication for all $K$. It turns out their technique carries over with minor modifications into the framework of signature theory that we consider, and we use it to show the following:

\begin{thm}[Group Property - informal]
	If $G$ is a generator matchgate with $2^K\times 2^{(n-1)K}$ standard signature $\underline{G}$ of rank $2^K$, then there exists a recognizer matchgate with $2^{K(n-1)}\times 2^K$ standard signature $\underline{R}$ such that $\underline{G}\cdot\underline{R} = I_{2^K}$, where $I_{2^K}$ denotes the $2^K\times 2^K$ identity matrix.\label{thm:groupprop_informal}
\end{thm}

Our general collapse theorem then follows from our generalizations of 1) and 2). This result gives a way forward for the development of the structural theory of holographic algorithms on higher domain sizes in the same vein as Cai et al.'s work on domain size 2. In \cite{valiant6}, Valiant gave examples of holographic algorithms on domain size 3, but holographic algorithms on higher domain sizes have yet to be explored. Our result shows that for domain size $k$, we can focus on understanding changes of basis in $\mathcal{M}_{2^{\lfloor\log_2 k\rfloor}\times k}(\mathds{C})$ rather than over an infinite set of dimensions, just as the collapse theorem of Cai and Lu \cite{cailu2} showed that on the Boolean domain, they could focus on understanding changes of basis in $\GL_{2}(\mathds{C})$.

\subsection{Organization}

In Section~\ref{sec:prelims}, we introduce the necessary notation and background, including a brief overview of holographic algorithms. In Section~\ref{sec:mgis}, we state the matchgate identities and some of their key consequences. In Section~\ref{sec:rigidclust}, we prove Theorems~\ref{thm:rigid_informal} and \ref{thm:sub_informal}. In Section~\ref{sec:collapse-group}, we then prove Theorem~\ref{thm:groupprop_informal}. In Section~\ref{sec:fuyang}, we generalize the main result of Fu and Yang \cite{fuyang} to bases of rank $k$ and reduce proving a collapse theorem on all domain sizes to proving one on domain sizes equal to powers of two. Finally, in Section~\ref{sec:collapse-final}, we prove the desired collapse theorem, Theorem 1.1, on domain sizes equal to powers of two by invoking the results from Section~\ref{sec:rigidclust}.

\section{Preliminaries}
\label{sec:prelims}

\subsection{Background}

Denote the Hamming weight of string $\alpha$ by $\wt(\alpha)$, and define the parity of $\alpha$ to be the parity of $\wt(\alpha)$. Given $1\le i\le m$, define $e_i\in\{0,1\}^m$ to be the bitstring with a single nonzero bit in position $i$. The parameter $m$ is implicit, and when this notation is used, $m$ will be clear from the context. Denote by $1^m$ the length-$m$ bitstring consisting solely of 1's.

We review some basic definitions and results about holographic algorithms. For a comprehensive introduction to this subject, see \cite{valiant4}.

\begin{defn}
	A \emph{matchgate} $\Gamma = (G,X,Y)$ is defined by a planar embedding of a planar graph $G = (V,E,W)$, \emph{input nodes} $X\subseteq V$, and \emph{output nodes} $Y\subseteq V$, where $X\cap Y = \emptyset$. We refer to $X\cup Y$ as the \emph{external nodes} of $\Gamma$.

	We say that $\Gamma$ has \emph{arity} $|X|+|Y|$. In the planar embedding of $G$, the input and output nodes are arranged such that if one travels counterclockwise around the outer face of $G$, one encounters first the input nodes labeled 1,2,...,$|X|$ and then the output nodes $|Y|$,...,2,1.

	If $\Gamma$ has exclusively output (resp. input) nodes, we say that $\Gamma$ is a generator (resp. recognizer). Otherwise, we say that $\Gamma$ is an \emph{$|X|$-input, $|Y|$-output transducer}.
\end{defn}

\begin{defn}
A \emph{basis} of size $\ell$ on domain size $k$ is a $2^{\ell}\times k$ matrix $M = (a^{\alpha}_i)$, where rows and columns are indexed by $\alpha\in\{0,1\}^{\ell}$ and $i\in[k]$ respectively.
\end{defn}


\begin{defn}
The \emph{standard signature} of a matchgate $\Gamma$ of arity $n\ell$ is a vector of dimension $2^{n \ell}$ which will be denoted by $\underline{\Gamma}$, where for $\alpha_i\in\{0,1\}^{\ell}$, $\underline{\Gamma}^{\alpha_1\cdots\alpha_n}$ denotes the entry of $\underline{\Gamma}$ indexed by $\alpha_1\circ\cdots\circ\alpha_n$. If $Z$ is the subset of the external nodes of $\Gamma$ for which $\alpha_1\circ\cdots\circ\alpha_n$ is the indicator string, then $$\underline{\Gamma}^{\alpha_1\cdots\alpha_n} = \PerfMatch(\Gamma\backslash Z).$$ Here, if $A = (A_{ij})$ is the adjacency matrix of $\Gamma$, $\PerfMatch$ is the polynomial in the entries of $A$ defined by $$\PerfMatch(A) = \sum_{M}\prod_{(i,j)\in M}A_{ij},$$ with the sum taken over the set $M$ of all perfect matchings of $\Gamma.$
\end{defn}

The following lemma follows from the definition of standard signatures.

\begin{lem}
	Suppose $\underline{R}$ is the standard signature of a recognizer of arity $n\ell$ and $T$ the standard signature of a transducer with $s$ inputs and $\ell$ outputs. Then $\underline{R}' = \underline{R}T^{\tensor n}$ is the standard signature of a recognizer matchgate of arity $ns$.\label{lem:triv_transducer}
\end{lem}

\begin{defn}
A column (resp. row) vector of dimension $k^n$ is said to be a \emph{generator (resp. recognizer) signature realizable over a basis $M$} if there exists a generator (resp. recognizer) matchgate $\Gamma$ satisfying $M^{\tensor n}G = \underline{G}$ (resp. $\underline{R}M^{\tensor n} = R$). We say that a collection of recognizer and generator signatures $R_1,...,R_a,G_1,...,G_b$ is \emph{simultaneously realizable} if they are realizable over a common basis $M$.
\end{defn}

In particular, if $M$ is square, the signature of a matchgate with respect to the standard basis is the standard signature. Also note that in terms of coordinates, we have that $$\underline{G}^{\alpha_1\cdots\alpha_n} = \sum_{j_1,...,j_n\in[k]}G^{j_1\cdots j_n}a^{\alpha_1}_{j_1}\cdots a^{\alpha_n}_{j_n}$$ and $$R_{j_1\cdots j_n} = \sum_{\alpha_1,...,\alpha_n\in\{0,1\}^{\ell}}\underline{R}_{\alpha_1\cdots\alpha_n} a^{\alpha_1}_{j_1}\cdots a^{\alpha_n}_{j_n}.$$

\begin{defn}
	A \emph{matchgrid} $\Omega = (G,R,W)$ is a weighted planar graph consisting of a set of $g$ generators $G = \{G_1,...,G_g\}$, a set of $r$ recognizers $R = \{R_1,...,R_r\}$, and a set of $w$ wires $W = \{W_1,...,W_w\}$, each of which has weight 1 and connects the output node of a generator to the input node of a recognizer so that every input and output node among the matchgates in $G\cup R$ lies on exactly one wire.
\end{defn}

\begin{defn}
	Suppose $\Omega = (G,R,W)$ is a matchgrid with $g$ generators, $r$ recognizers, and $w$ wires, and let $M$ be a basis for $\Omega$. Define the \emph{Holant} to be the following quantity: $$\Holant(\Omega) = \sum_{z\in[k]^w}\left(\prod^g_{i=1}G^{y_i}_i\prod^r_{j=1}R^{x_j}_j\right).$$ Here, $z = y_1\circ\cdots\circ y_g = x_1\circ\cdots\circ x_r$ such that $y_i\in [k]^{|Y_i|}$ and $x_j\in [k]^{|X_j|}$ for $Y_i$ the output nodes of $G_i$ and $X_j$ the input nodes of $R_j$, and $G_i$ and $R_j$ denote the signatures of their respective matchgates under basis $M$.
\end{defn}

Valiant's Holant theorem states the following.

\begin{thm}[Theorem 4.1, \cite{valiant4}]
	If $\Omega$ is a matchgrid over a basis $M$, then $\Holant(\Omega) = \PerfMatch(\Omega)$.
\end{thm}

As the Fisher-Kasteleyn-Temperley algorithm \cite{fk1,temp} can compute the number of perfect matchings of a planar graph in polynomial time, $\Holant(\Omega)$ can be computed in polynomial time as long as $\Omega$ is planar.

\subsection{Matrix Form of Signatures}

It will be convenient to regard signatures not as vectors but as matrices.

\begin{defn}
For generator signature $G$, the $t$-th matrix form $G(t)$ ($1\le t\le n$) is a $k\times k^{n-1}$ matrix where the rows are indexed by $1\le j_t\le k$ and the columns are indexed by $j_1\cdots j_{t-1}j_{t+1}\cdots j_n$ in lexicographic order.
\end{defn}

\begin{defn}
For recognizer signature $R$, the $t$-th matrix form $R(t)$ ($1\le t\le n$) is a $k^{n-1}\times k$ matrix where the rows are indexed by $j_1\cdots j_{t-1}j_{t+1}\cdots j_n$ in lexicographic order and the columns are indexed by $1\le j_t\le k$.
\end{defn}

We would also like to regard standard signatures as matrices; if basis $M$ is square, the following definitions are special cases of the above.

\begin{defn}
For standard signature $\underline{G}$, the $t$-th matrix form $\underline{G}(t)$ ($1\le t\le n$) is a $2^{\ell}\times 2^{(n-1)\ell}$ matrix where the rows are indexed by $\alpha_t$ and the columns are indexed by $\alpha_1\cdots\alpha_{t-1}\alpha_{t+1}\cdots\alpha_n$.
\end{defn}

\begin{defn}
For standard signature $\underline{R}$, the $t$-th matrix form $\underline{R}(t)$ ($1\le t\le n$) is a $2^{(n-1)\ell}\times 2^{\ell}$ matrix where the rows are indexed by $\alpha_1\cdots\alpha_{t-1}\alpha_{t+1}\cdots\alpha_n$ and the columns are indexed by $\alpha_t$.
\end{defn}

One can check that $\underline{G}(t)$ and $G(t)$, and $\underline{R}(t)$ and $R(t)$, are related as follows.

\begin{lem}
	If $\underline{G} = M^{\tensor n}G$, then $$\underline{G}(t) = MG(t)(M^T)^{\tensor(n-1)}.$$\label{lem:relunderline}
\end{lem}

\begin{lem}
	If $R = \underline{R}M^{\tensor n}$, then $$R(t) = (M^T)^{\tensor(n-1)}\underline{R}(t)M.$$\label{lem:relunderline2}
\end{lem}

We will denote by $\underline{G}(t)^{\sigma}$ the row vector indexed by $\sigma$, $\underline{G}(t)_{\zeta}$ the column vector indexed by $\zeta$, and $\underline{G}(t)^{\sigma}_{\zeta}$ the entry of $\underline{G}$ in row $\sigma$ and column $\zeta$. We use analogous notation for matrices $\underline{R}$, $G$, and $R$. In general, if $\Gamma$ is any matrix, we will sometimes refer to the entry $\Gamma^{\sigma}_{\zeta}$ as the ``entry (indexed by) $(\sigma,\zeta)$.''

In general, if $\Gamma$ is a matrix with rows indexed by $\{0,1\}^a$ and columns indexed by $\{0,1\}^b$, and $S\subset\{0,1\}^a$ (resp. $S\subset\{0,1\}^b$), we will let $\Gamma^S$ (resp. $\Gamma_S$) denote the submatrix of $\Gamma$ consisting of rows (resp. columns) indexed by $S$. Where $\Gamma$ is clear from context, we will denote the row span of $\Gamma^S$ (resp. column span of $\Gamma_S$) by $\sp(S)$.

Lastly, a column/row is called \emph{odd} (resp. \emph{even}) if its index is odd (resp. even).

\subsection{Degenerate and Full Rank Signatures}

\begin{defn}
	A signature $G$ (generator or recognizer) is degenerate iff there exist vectors $\gamma_i$ ($1\le i\le n$) of dimension $k$ for which $G = \gamma_1\tensor\cdots\tensor\gamma_n$.
\end{defn}

\begin{lem}[Lemma 3.1, \cite{caifu}]
	A signature $G$ is degenerate iff $\rank(G(t))\le 1$ for $1\le t\le n$.\label{lem:caidegen}
\end{lem}

\begin{defn}
A signature $G$ is of full rank iff there exists some $1\le t\le n$ for which $\rank(G(t)) = k$.
\end{defn}

By Lemma~\ref{lem:relunderline}, if signature $G$ is of full rank, then for the corresponding standard signature $\underline{G}$, we have that $\rank(\underline{G}(t)) = k$ for some $t$. Over domain size 2, by Lemma~\ref{lem:caidegen}, all signatures not of full rank are degenerate, and holographic algorithms exclusively using such signatures are trivial because degenerate generators can by definition be decoupled into arity-1 generators. Over domain size $k\ge 3$ however, it is unknown to what extent holographic algorithms exclusively using signatures not of full rank trivialize. In \cite{caifu}, the collapse theorems over domain sizes 3 and 4 were proved under the assumption that at least one signature is of full rank, so we too make that assumption.

\subsection{Clusters}

One of the key results in our proof of the general collapse theorem is the existence within any matrix-form standard signature of a full-rank square submatrix whose entries have indices satisfying certain properties. In this section we make precise those properties.

\begin{defn}
	A set of $2^m$ distinct bitstrings $Z = \{x^1, ..., x^{2^m}\}\subset\{0,1\}^n$ is an \emph{$(m,n)$-cluster} if there exists $s\in\{0,1\}^n$ and positions $p_1, ..., p_m\in[n]$ such that for each $i\in[2^m]$, $x^i = s\oplus\left(\bigoplus_{j\in J}e_{p_j}\right)$ for some $J\subset\{p_1,...,p_m\}$. We write $Z$ as $s + \{e_{p_1},...,e_{p_m}\}$. Note that for a fixed cluster, $s$ is only unique up to the bits outside of positions $p_1,...,p_m$. If a cluster $Z'$ is a subset of another cluster $Z$, we say that $Z'$ is a \emph{subcluster} of $Z$.
\end{defn}

\begin{defn}
	In $\Gamma$, a $2^m\times 2^m$ submatrix $\Gamma'$ is a \emph{$m$-cluster submatrix} if there exist $(m,\ell)$- and $(m,(n-1)\ell)$-clusters $\Sigma$ and $Z$ such that $\Gamma' = (\Gamma^{\sigma}_{\zeta})^{\sigma\in\Sigma}_{\zeta\in Z}$ (here we omit the parameters $n$ and $\ell$ in the notation as they will be clear from context).
\end{defn}

\section{Matchgate Identities and Consequences}
\label{sec:mgis}

\subsection{Parity Condition and Matchgate Identities}

The most obvious property that standard signatures satisfy is the \emph{parity condition}: because a graph with an odd number of vertices has no perfect matchings, the indices of the nonzero entries in $\underline{G}$ have the same parity. It trivially follows that in $\underline{G}(t)$, columns $\underline{G}(t)_{\zeta}$ and $\underline{G}(t)_{\eta}$, if both nonzero, are linearly independent if $\zeta$ and $\eta$ are of opposite parities. The same holds for the rows of $\underline{G}(t)$.

In \cite{caigoren} it is shown algebraically that the parity condition in fact follows from the so-called \emph{matchgate identities} which we present below, quadratic relations which together form a necessary and sufficient condition for a vector to be the standard signature of some matchgate.

As in Cai and Fu's proof of the collapse theorem for domain size 4, we will make heavy use of the matchgate identities stated below. Wherever we invoke them, they will be for generator matchgates, so we focus on this case.

\begin{thm}[Theorem 2.1, \cite{caigoren}]
A $2^{\ell}\times 2^{(n-1)\ell}$ matrix $\Gamma$ is the $t$-th matrix form of the standard signature of some generator matchgate iff for all $\zeta,\eta\in\{0,1\}^{(n-1)\ell}$ and $\sigma,\tau\in\{0,1\}^{\ell}$, the following \emph{matchgate identity} holds. Let $\zeta\oplus\eta = e_{q_1}\oplus\cdots\oplus e_{q_{d'}}$ and $\sigma\oplus\tau = e_{p_1}\oplus\cdots\oplus e_{p_d}$, where $q_1<\cdots<q_{d'}$ and $p_1<\cdots<p_d$. Then \begin{equation}\sum^d_{i=1}(-1)^{i+1}\Gamma^{(\sigma\oplus e_{p_1}\oplus e_{p_i})}_{\zeta}\Gamma^{(\tau\oplus e_{p_1}\oplus e_{p_i})}_{\eta} = \sum^{d'}_{j=1}\pm\Gamma^{(\sigma\oplus e_{p_1})}_{(\zeta\oplus e_{q_j})}\Gamma^{(\tau\oplus e_{p_1})}_{(\eta\oplus e_{q_j})}.\label{eq:premgi}\end{equation} Here the $\pm$ signs depend on both $j$ and, if $q_j$ is after the $t$-th block, the parity of $d$. If $d$ is even, \begin{equation}\sum^d_{i=1}(-1)^{i+1}\Gamma^{(\sigma\oplus e_{p_1}\oplus e_{p_i})}_{\zeta}\Gamma^{(\tau\oplus e_{p_1}\oplus e_{p_i})}_{\eta} = \epsilon_{\zeta,\eta}\sum^{d'}_{j=1}(-1)^{j+1}\Gamma^{(\sigma\oplus e_{p_1})}_{(\zeta\oplus e_{q_j})}\Gamma^{(\tau\oplus e_{p_1})}_{(\eta\oplus e_{q_j})},\label{eq:mgis}\end{equation} where $\epsilon_{\zeta,\eta}\in\{\pm 1\}$ is positive (resp. negative) if the number of $q_j$ preceding the $t$-th block is odd (resp. even). \label{thm:mgis}
\end{thm}

\begin{remark}
	If $d+d'$ is odd, $\eqref{eq:premgi}$ is trivial by the parity condition.\label{remark:dumb}
\end{remark}

We will be making extensive use of the matchgate identities in this paper, but we will typically not care about the $\epsilon_{\zeta,\eta}$ sign on the right-hand side of \eqref{eq:mgis}. For this reason, it will be convenient to make the following definition.

\begin{defn}
	A $2^{\ell}\times 2^m$ matrix $M$ is a \emph{pseudo-signature} if for all $\sigma,\tau$ for which $\wt(\sigma\oplus\tau)$ is even, its entries satisfy the corresponding identity \eqref{eq:mgis} up to a factor of $\pm 1$ on the right-hand side.
\end{defn}

Standard signatures and cluster submatrices are all examples of pseudo-signatures.

\begin{obs}
	If $M$ is a pseudo-signature, then its transpose $M^T$ is a pseudo-signature.\label{obs:transpose}
\end{obs}

\subsection{Matchgate Identities and Determinants}

We now derive from the matchgate identities some basic linear algebraic properties of the columns of pseudo-signatures. By Observation~\ref{obs:transpose}, these also apply for the rows.

Firstly, we have the following immediate consequence of Theorem~\ref{thm:mgis}. We specifically consider the case where $\wt(\zeta\oplus\eta)$ is even and, by Remark~\ref{remark:dumb}, $\wt(\sigma\oplus\tau)$ is even. So write $\zeta\oplus\eta = e_{q_1}\oplus\cdots\oplus e_{q_{2d'}}$ and $\sigma\oplus\tau = e_{p_1}\oplus\cdots\oplus e_{p_{2d}}$.

Reverse the roles of $\zeta$ and $\eta$ in \eqref{eq:mgis}. Subtract the resulting equation from \eqref{eq:mgis} to find $$\sum^{2d}_{i=1}(-1)^{i+1}\left(\Gamma_{\eta}^{(\sigma\oplus e_{p_1}\oplus e_{p_i})}\Gamma_{\zeta}^{(\tau\oplus e_{p_1}\oplus e_{p_i})}-\Gamma_{\zeta}^{(\sigma\oplus e_{p_1}\oplus e_{p_i})}\Gamma_{\eta}^{(\tau\oplus e_{p_1}\oplus e_{p_i})}\right) = $$ $$\epsilon_{\zeta,\eta}\sum^{2d'}_{j=1}(-1)^{j+1}\left(\Gamma_{(\zeta\oplus e_{q_j})}^{(\sigma\oplus e_{p_1})}\Gamma_{(\eta\oplus e_{q_j})}^{(\tau\oplus e_{p_1})} - \Gamma_{(\eta\oplus e_{q_j})}^{(\sigma\oplus e_{p_1})}\Gamma_{(\zeta\oplus e_{q_j})}^{(\tau\oplus e_{p_1})}\right),$$ or equivalently, \begin{equation}\sum^{2d}_{i=1}(-1)^{i+1}\left|\begin{matrix}
\Gamma_{\zeta}^{(\sigma\oplus e_{p_1}\oplus e_{p_i})} & \Gamma_{\eta}^{(\sigma\oplus e_{p_1}\oplus e_{p_i})} \\
\Gamma_{\zeta}^{(\tau\oplus e_{p_1}\oplus e_{p_i})} & \Gamma_{\eta}^{(\tau\oplus e_{p_1}\oplus e_{p_i})}
\end{matrix}\right| = \epsilon_{\zeta,\eta}\sum^{2d'}_{j=1}(-1)^{j+1}\left|\begin{matrix}
\Gamma_{(\zeta\oplus e_{q_j})}^{(\sigma\oplus e_{p_1})} & \Gamma_{(\eta\oplus e_{q_j})}^{(\sigma\oplus e_{p_1})} \\
\Gamma_{(\zeta\oplus e_{q_j})}^{(\tau\oplus e_{p_1})} & \Gamma_{(\eta\oplus e_{q_j})}^{(\tau\oplus e_{p_1})}\end{matrix}\right|\label{eq:mgis2}\end{equation}

\begin{example}
\label{ex:mgi_basic}
	Suppose $d = d' = 1$. Then \eqref{eq:mgis2} becomes $$2\left|\begin{matrix}\Gamma^{\sigma}_{\zeta} & \Gamma^{\sigma}_{\eta} \\
	\Gamma^{\tau}_{\zeta} & \Gamma^{\tau}_{\eta}\end{matrix}\right| = 2\epsilon_{\zeta,\eta}\left|\begin{matrix}\Gamma^{\sigma\oplus e_{p_1}}_{\zeta\oplus e_{q_1}} & \Gamma^{\sigma\oplus e_{p_1}}_{\zeta\oplus e_{q_2}} \\
	\Gamma^{\sigma\oplus e_{p_2}}_{\zeta\oplus e_{q_1}} & \Gamma^{\sigma\oplus e_{p_2}}_{\zeta\oplus e_{q_2}}
	\end{matrix}\right|,$$ so in particular, the matrix on the left is singular iff the latter is.

	More generally, only suppose that $d = 1$. Then \eqref{eq:mgis2} becomes \begin{equation}2\left|\begin{matrix}\Gamma^{\sigma}_{\zeta} & \Gamma^{\sigma}_{\eta} \\
	\Gamma^{\tau}_{\zeta} & \Gamma^{\tau}_{\eta}\end{matrix}\right| = \epsilon_{\zeta,\eta}\sum^{2d'}_{j=1}(-1)^{j+1}\left|\begin{matrix}\Gamma^{(\sigma\oplus e_{p_1})}_{(\zeta\oplus e_{q_j})} & \Gamma^{(\sigma\oplus e_{p_1})}_{(\eta\oplus e_{q_j})} \\
	\Gamma^{(\sigma\oplus e_{p_2})}_{(\zeta\oplus e_{q_j})} & \Gamma^{(\sigma\oplus e_{p_2})}_{(\eta\oplus e_{q_j})}\end{matrix}\right|\label{eq:lindepeq}\end{equation} so in particular, the matrix on the left-hand side is singular if all $2d'$ matrices on the right-hand side are singular.\label{ex:basic2}
\end{example}

In other words, if $\ell = 2$ so that $\Gamma$ only has four rows, columns $\zeta$ and $\eta$ as defined above are linearly dependent if all pairs of neighboring columns are linearly dependent. We shall see in the next section (Corollary~\ref{cor:lindep}) that this is true even when $\Gamma$ has an arbitrary number of rows.

\subsection{Wedge Products of Columns}

Motivated by Example~\ref{ex:mgi_basic}, we'd like to study the set of all $2\times 2$ determinants $\left|\begin{matrix}
\Gamma^{\sigma}_{\zeta} & \Gamma^{\sigma}_{\eta} \\
\Gamma^{\tau}_{\zeta} & \Gamma^{\tau}_{\eta}
\end{matrix}\right|$ given two column vectors $\Gamma_{\zeta}$ and $\Gamma_{\eta}$ of the same parity. These are merely the coefficients of the wedge product $\Gamma_{\zeta}\wedge\Gamma_{\eta}$ under the standard basis $\{v_{\sigma}\wedge v_{\tau}\}_{\sigma,\tau\in\{0,1\}^{2^{\ell}}, \ \sigma<\tau}$ (where the relation $\sigma<\tau$ denotes lexicographic ordering) of $\bigwedge^2\C^{2^{\ell}}$, the second exterior power of $\C^{2^{\ell}}$. The matchgate identities imply the following consequence about the relationships among the wedge products $\Gamma_{\zeta}\wedge\Gamma_{\eta}$ as $\zeta$ and $\eta$ vary.

\begin{lem}
	If $\zeta_1,\eta_1,\zeta_2,\eta_2,...,\zeta_m,\eta_m$ are even indices for which \begin{equation}\sum^m_{\nu=1}a_{\nu}\left(\Gamma_{\zeta_{\nu}}\wedge\Gamma_{\eta_{\nu}}\right) = 0\label{eq:oddresult}\end{equation} for some $a_1,...,a_m\in\mathds{C}$, then \begin{equation}\sum^m_{\nu=1}\epsilon_{\zeta_{\nu},\eta_{\nu}}a_{\nu}\left(\sum^{2d_{\nu}}_{i=1}(-1)^{j+1}\Gamma_{\zeta_{\nu}\oplus e_{p^{\nu}_i}}\wedge\Gamma_{\eta_{\nu}\oplus e_{p^{\nu}_i}}\right) = 0,\label{eq:evenresult}\end{equation} where for each $\nu$, $\wt(\zeta_{\nu}\oplus\eta_{\nu})= 2d_{\nu}$ and $\zeta_{\nu}\oplus\eta_{\nu} = e_{p^{\nu}_1}\oplus\cdots\oplus e_{p^{\nu}_{2d_{\nu}}}$.
	\label{lem:oddimplieseven}
\end{lem}

\begin{proof}
	For convenience, we will denote $\epsilon_{\zeta_{\nu},\eta_{\nu}}$ by $\epsilon_{\nu}$. First, we rewrite \eqref{eq:evenresult} in terms of coordinates as \begin{equation}\sum^m_{\nu=1}\epsilon_{\nu}a_{\nu}\left(\sum^{2d_{\nu}}_{i=1}(-1)^{i+1}\sum_{\sigma<\tau}\left|\begin{matrix}
		\Gamma^{\sigma}_{\zeta_{\nu}\oplus e_{p^{\nu}_i}} & \Gamma^{\sigma}_{\eta_{\nu}\oplus e_{p^{\nu}_i}} \\
		\Gamma^{\tau}_{\zeta_{\nu}\oplus e_{p^{\nu}_i}} & \Gamma^{\tau}_{\eta_{\nu}\oplus e_{p^{\nu}_i}}
	\end{matrix}\right|(v_{\sigma}\wedge v_{\tau})\right) = 0,\label{eq:coords}\end{equation} where $\sigma<\tau$ denotes lexicographical ordering. Note that the determinants that appear in the left-hand side of \eqref{eq:coords} are zero when $\sigma$ and $\tau$ are of opposite parity. Moreover, depending on the parity of the signature $\Gamma$, either all such determinants are also zero for $\sigma$ and $\tau$ even, or they are all zero for $\sigma$ and $\tau$ odd.

	Rearranging the order of summations in \eqref{eq:coords}, the desired identity becomes \begin{equation}\sum_{\sigma<\tau}(v_{\sigma}\wedge v_{\tau})\cdot\left(\sum^m_{\nu=1}\epsilon_{\nu}a_{\nu}\sum^{2d_{\nu}}_{i=1}(-1)^{i+1}\left|\begin{matrix}
		\Gamma^{\sigma}_{\zeta_{\nu}\oplus e_{p^{\nu}_i}} & \Gamma^{\sigma}_{\eta_{\nu}\oplus e_{p^{\nu}_i}} \\
		\Gamma^{\tau}_{\zeta_{\nu}\oplus e_{p^{\nu}_i}} & \Gamma^{\tau}_{\eta_{\nu}\oplus e_{p^{\nu}_i}}
	\end{matrix}\right|\right) = 0.\label{eq:desired}\end{equation}For a fixed pair $\sigma<\tau$, let $\wt(\sigma\oplus\tau) = 2d'$ and $\sigma\oplus\tau = e_{q_1}\oplus\cdots\oplus e_{q_{2d'}}$. If we apply \eqref{eq:mgis2} and rearrange the order of summation once more, the coefficient of $v_{\sigma}\wedge v_{\tau}$ above becomes
	$$\sum^m_{\nu=1}a_{\nu}\cdot\sum^{2d'}_{j=1}(-1)^{j+1}\left|\begin{matrix}
		\Gamma^{\sigma\oplus e_{q_j}}_{\zeta_{\nu}} & \Gamma^{\sigma\oplus e_{q_j}}_{\eta_{\nu}} \\
		\Gamma^{\tau\oplus e_{q_j}}_{\zeta_{\nu}} & \Gamma^{\tau\oplus e_{q_j}}_{\eta_{\nu}}
	\end{matrix}\right| = \sum^{2d'}_{j=1}(-1)^{j+1}\sum^m_{\nu=1}a_{\nu}\left|\begin{matrix}
		\Gamma^{\sigma\oplus e_{q_j}}_{\zeta_{\nu}} & \Gamma^{\sigma\oplus e_{q_j}}_{\eta_{\nu}} \\
		\Gamma^{\tau\oplus e_{q_j}}_{\zeta_{\nu}} & \Gamma^{\tau\oplus e_{q_j}}_{\eta_{\nu}}
	\end{matrix}\right|.$$ But note that if we expand \eqref{eq:oddresult} in terms of coordinates, the term $$a_{\nu}\left|\begin{matrix}
			\Gamma^{\sigma\oplus e_{q_j}}_{\zeta_{\nu}} & \Gamma^{\sigma\oplus e_{q_j}}_{\eta_{\nu}} \\
			\Gamma^{\tau\oplus e_{q_j}}_{\zeta_{\nu}} & \Gamma^{\tau\oplus e_{q_j}}_{\eta_{\nu}}
		\end{matrix}\right|$$ is precisely the coefficient of $v_{\sigma\oplus e_{q_j}}\wedge v_{\tau\oplus e_{q_j}}$ in the expansion of \eqref{eq:oddresult} in terms of coordinates and hence zero by assumption, so \eqref{eq:desired} holds as desired.
\end{proof}

\begin{cor}
	Let $\zeta,\eta\in\{0,1\}^{(n-1)\ell}$ be such that $\zeta\oplus\eta = \bigoplus^{2d}_{j=1}e_{p_i}$. If column $\Gamma_{(\zeta\oplus e_{p_i})}$ is linearly dependent with column $\Gamma_{(\eta\oplus e_{p_i})}$ for $1\le i\le 2d$, then column $\Gamma_{\zeta}$ is linearly dependent with column $\Gamma_{\eta}$.\label{cor:lindep}
\end{cor}

\begin{cor}
	Let $\zeta,\eta\in\{0,1\}^{(n-1)\ell}$ be such that $\zeta\oplus\eta = \bigoplus^{2d}_{i=1}e_{p_i}$. If there exists $i\in[2d']$ such that column $\Gamma_{\zeta\oplus e_{p_i}}$ is linearly dependent with column $\Gamma_{\eta\oplus e_{p_i}}$ for $i = 1,...,\hat{j},...,2d'$, where $\hat{j}$ denotes omission of index $j$, and if $\Gamma_{\zeta}$ is also linearly dependent with column $\Gamma_{\eta}$, then $\Gamma_{\zeta\oplus e_{p_j}}$ and $\Gamma_{\eta\oplus e_{p_j}}$ are linearly dependent.\label{cor:lindepprime}
\end{cor}

Lemma~\ref{lem:oddimplieseven} says that any linear relation among wedges of even columns yields a linear relation among wedges of odd columns, and vice versa.

Lastly, we need the following elementary result in multilinear algebra.

\begin{lem}
	If $v_1,...,v_n$ are linearly independent in vector space $V$, then the set of all $v_i\wedge v_j$ for $i<j$ are linearly independent in $\bigwedge^2 V$.\label{lem:wedge}
\end{lem}

Combining this with Lemma~\ref{lem:oddimplieseven} yields the following key ingredient to the analysis in Section~\ref{sec:rigidclust}.

\begin{lem}
	Suppose $\zeta_0,\eta\in\{0,1\}^K$ such that $\zeta_0\neq\eta$, and the indices in $T = \{\zeta_1,...,\zeta_m\}\subset\{0,1\}^K$ are distinct and have the same parity. Suppose further that $\zeta_0\neq\zeta_1,...,\zeta_m$. Let $\zeta_i\oplus\eta = e_{p^i_1}\oplus\cdots\oplus e_{p^i_{d_i}}$ for $0\le i\le m$, where $d_i:=\wt(\zeta_i\oplus\eta)$. Define $$S:=\bigcup_{0\le i\le m, 1\le j\le d_i}\{\eta\oplus e_{p^i_j},\zeta_i\oplus e_{p^i_j}\}\subset\{0,1\}^{K}$$ not in the sense of multisets, that is, we throw out duplicates so that the strings in $S$ are all distinct.

	Suppose the columns indexed by $S$ are linearly independent. Then \begin{equation}\Gamma_{\zeta_0}\not\in\sp(\Gamma_{\zeta_1},...,\Gamma_{\zeta_m})\label{eq:notinspan}\end{equation}

	If $\wt(\eta\oplus\zeta_0)\ge 4$ and $j^*\in[d_0]$, then \eqref{eq:notinspan} holds even if only the columns indexed by $S':=S\backslash\{\zeta_0\oplus e_{p^0_{j^*}}\}$ are linearly independent.\label{lem:key}
\end{lem}

\begin{proof}
	We first prove the claim without the assumption that $\wt(\eta\oplus\zeta_0)\ge 4$. Suppose to the contrary that $\Gamma_{\zeta_0} = \sum^m_{i=1}a_i\Gamma_{\zeta_i}$ so that $\Gamma_{\zeta_0}\wedge\Gamma_{\eta} - \sum^m_{i=1}a_i\Gamma_{\zeta_i}\wedge\Gamma_{\eta} = 0$. By Lemma~\ref{lem:oddimplieseven}, this linear relation implies the following linear relation among wedges of columns of the other parity: \begin{equation}
		\epsilon_{\zeta_0,\eta}\left(\sum^{d_0}_{j=1}(-1)^{j+1}\Gamma_{\zeta_0\oplus e_{p^0_j}}\wedge\Gamma_{\eta\oplus e_{p^0_j}}\right) - \sum^m_{i=1}\epsilon_{\zeta_i,\eta}a_i\left(\sum^{d_i}_{j=1}(-1)^{j+1}\Gamma_{\zeta_i\oplus e_{p^i_j}}\wedge\Gamma_{\eta\oplus e_{p^i_j}}\right) = 0
		\label{eq:otherparity}
	\end{equation}

	We claim this is a nontrivial linear relation contradicting the linear independence of the columns indexed by $S$. For each of the $m+1$ sums indexed by $1\le j\le d_i$ appearing in \eqref{eq:otherparity}, if $d_i = 2$, rewrite $\sum^{d_i}_{j=1}(-1)^{j+1}\Gamma_{\zeta_i\oplus e_{p^i_j}}\wedge\Gamma_{\eta\oplus e_{p^i_j}}$ as $2\Gamma_{\zeta_i\oplus e_{p^i_1}}\wedge\Gamma_{\eta\oplus e_{p^i_1}}$.

	After this consolidation, note that the wedge products in \eqref{eq:otherparity} are now all distinct. Certainly for any $j,j'\in[d_i]$ where $d_i\ge 4$, $\Gamma_{\zeta_i\oplus e_{p^i_j}}\wedge\Gamma_{\eta\oplus e_{p^i_j}}$ and $\Gamma_{\zeta_i\oplus e_{p^i_{j'}}}\wedge\Gamma_{\eta\oplus e_{p^i_{j'}}}$ are linearly independent. For $i,i'$ such that $d_i = 2$ and $d_{i'}>2$, $2\Gamma_{\zeta_i\oplus e_{p^i_1}}\wedge\Gamma_{\eta\oplus e_{p^i_1}}$ and any $\Gamma_{\zeta_{i'}\oplus e_{p^{i'}_j}}\wedge\Gamma_{\eta\oplus e_{p^{i'}_j}}$ are linearly independent as $\zeta_i\oplus\eta\neq\zeta_{i'}\oplus\eta$, contradicting the assumption that $\{\zeta_1,...,\zeta_m\}$ are distinct. Similarly, for $i,i'$ such that $d_i>2$ and $d_{i'}>2$, any $\Gamma_{\zeta_i\oplus e_{p^i_j}}\wedge\Gamma_{\eta\oplus e_{p^i_j}}$ and any $\Gamma_{\zeta_{i'}\oplus e_{p^{i'}_{j'}}}\wedge\Gamma_{\eta\oplus e_{p^{i'}_{j'}}}$ are linearly independent as $\zeta_i\oplus\eta\neq \zeta_{i'}\oplus\eta$.

	We conclude that \eqref{eq:otherparity}, after consolidating sums for which $d_i = 2$, consists of a nonzero number of linearly independent wedge products of columns indexed by $S$, so \eqref{eq:otherparity} is indeed a nontrivial linear relation among the wedge products $\Gamma_s\wedge\Gamma_{s'}$ for $s, s'\in S$. But all columns indexed by $S$ are linearly independent by assumption, so this linear relation contradicts Lemma~\ref{lem:wedge} and the linear independence of columns indexed by $S$.

	For the second part of Lemma~\ref{lem:key}, we claim that \eqref{eq:otherparity} is still a nontrivial relation. Pick any $k\neq j^*$ inside $[d_0]$. Because $d_0\ge 4$, \begin{equation}\zeta_0\oplus e_{p^0_k}, \eta\oplus e_{p^0_k}\neq \eta\oplus e_{p^0_{j^*}}.\label{eq:neq}\end{equation}

	We have already taken care of the case where $\Gamma_{\zeta_0 + e_{p^0_{j^*}}}\not\in\sp(S')$ above, so suppose instead that \begin{equation}\Gamma_{\zeta_0 + e_{p^0_{j^*}}} = \sum_{s\in S'}b_s\Gamma_s\label{eq:substitute}.\end{equation} If we consolidate sums for which $d_i = 2$ in \eqref{eq:otherparity} as above and substitute \eqref{eq:substitute} into the resulting equation, the wedge products that \eqref{eq:otherparity} now contains also include ones of the form $\Gamma_s\wedge\Gamma_{\eta+e_{p^0_{j^*}}}$, which cannot be linearly dependent with $\Gamma_{\zeta_0\oplus e_{p^0_k}}\wedge\Gamma_{\eta\oplus e_{p^0_k}}$ by \eqref{eq:neq}.

	Every other wedge product in \eqref{eq:otherparity} is of the form $\Gamma_{\zeta_i\oplus e_{p^i_j}}\wedge\Gamma_{\eta\oplus e_{p^i_j}}$ and also cannot be linearly dependent with $\pm\Gamma_{\zeta_0\oplus e_{p^0_k}}\wedge\Gamma_{\eta\oplus e_{p^0_k}}$ or else, as before, we'd find that $\zeta_i\oplus\eta = \zeta_k\oplus\eta$, contradicting the assumption that $\{\zeta_1,...,\zeta_m\}$ are distinct.

	It follows that if we rewrite the left-hand side of \eqref{eq:otherparity} in the form $\sum_{s,s'\in S'}b_{s,s'}\Gamma_{s}\wedge\Gamma_{s'}$ (uniquely because by Lemma~\ref{lem:wedge} the $\Gamma_s\wedge\Gamma_{s'}$ are linearly independent), $b_{\zeta_0\oplus e_{p^0_k},\eta\oplus e_{p^0_k}}\neq 0$. So \eqref{eq:otherparity} is still a nontrivial linear relation, contradicting Lemma~\ref{lem:wedge} and the linear independence of columns indexed by $S$.
\end{proof}

\section{Rigidity and Cluster Existence}
\label{sec:rigidclust}

In this section, we will prove the rigidity theorem and the cluster existence theorem, informally stated as Theorems~\ref{thm:rigid_informal} and \ref{thm:sub_informal}. Now that we have introduced the appropriate terminology, we first state both precisely.

\begin{thm}[Rank Rigidity - formal]
	If $\Gamma$ is a $2^{\ell}\times 2^m$ pseudo-signature, then $\rank(\Gamma)$ is a power of 2. Equivalently, for all $\kappa\ge 1$, \begin{equation}\rank(\Gamma)\ge 2^{\kappa}+1 \Rightarrow\rank(\Gamma)\ge 2^{\kappa+1}\label{eq:implication1}\end{equation}
\label{thm:rigid}
\end{thm}

\begin{thm}[Cluster Existence - formal]
	If $\Gamma = \underline{G}(t)$ for some matchgate $G$ of arity $n\ell$, and $\rank(\Gamma)\ge k$, then there is a $\lceil\log_2 k\rceil$-cluster submatrix of full rank.\label{thm:corsub}
\end{thm}

To prove Theorem~\ref{thm:corsub}, we claim it is enough to show the following:

\begin{thm}
	If $\Gamma$ is a $2^{\ell}\times 2^{m}$ pseudo-signature such that $\rank(\Gamma)\ge k$, then there exists a $(\lceil\log_2 k\rceil, m)$-cluster $Z$ for which $\Gamma_Z$ is of full rank.

	Equivalently, for all $\kappa\ge 1$, \begin{equation}
	\rank(\Gamma)\ge 2^{\kappa}+1 \Rightarrow \exists \ (\kappa+1, m)\text{-cluster} \ Z \ \text{such that} \ \rank(\Gamma_Z) = 2^{\kappa+1}\label{eq:implication2}\end{equation}
\label{thm:sub}
\end{thm}

\begin{proof}[Proof of Theorem~\ref{thm:corsub}]
	Apply Theorem~\ref{thm:sub} to $\Gamma$ to obtain $\Gamma_Z$. By Observation~\ref{obs:transpose}, $\Gamma_Z^T$ is also a pseudo-signature. Apply Theorem~\ref{thm:sub} to $\Gamma^T$ to get the desired cluster submatrix.
\end{proof}

Note that for fixed $\kappa$, \eqref{eq:implication2} implies \eqref{eq:implication1}. We will jointly prove Theorem~\ref{thm:rigid} and Theorem~\ref{thm:sub} by induction on $k$. Cai and Fu have already shown both for $k\le 4$; we take these results as our base case. We complete the following two inductive steps.

\begin{indstep}
If implication~\eqref{eq:implication2} holds for $1\le\kappa\le K-1$, then implication \eqref{eq:implication1} holds for $1\le \kappa\le K$. \label{lem:indstep1}
\end{indstep}

\begin{indstep}
If implication~\eqref{eq:implication1} holds for $1\le \kappa\le K$ and implication~\eqref{eq:implication2} holds for $\kappa\le K-1$, then implication \eqref{eq:implication2} also holds for $\kappa = K$.\label{lem:indstep2}
\end{indstep}

Note that once we have proven the rigidity and cluster existence theorems, we additionally obtain the following.

\begin{cor}
	If $G$ is a full-rank matchgate signature on domain size $k$, it is only realizable over bases $M$ of rank at most $2^{\lfloor\log_2 k\rfloor}$.\label{cor:red}
\end{cor}

\begin{proof}
	If $k$ is a power of two, the claim is trivial. If not, assume $G$ is a generator (standard signatures of recognizer are also pseudo-signatures, so the argument in that case is analogous). If $\rank(G(t)) = k$ and $\rank(M)\ge 2^{\lfloor\log_2 k\rfloor} + 1$, then by Lemma~\ref{lem:relunderline2}, $2^{\lfloor\log_2 k\rfloor} + 1\le \rank(\underline{G}(t))\le k$. But Theorem~\ref{thm:rigid} would then imply $\rank(\underline{G}(t))\ge 2^{\lfloor\log_2 k\rfloor+1}>k$, a contradiction.
\end{proof}

\subsection{Rank Rigidity Theorem}
\label{subsec:rigid}

In this subsection, we complete the former inductive step, and in the next, we complete the latter.

Before we prove Inductive Step~\ref{lem:indstep1} in its entirety, we take care of the case where $m = K + 1$. While this might appear to be the simplest case because $m$ is minimal, it will turn out that cases where $m$ is greater will reduce to this case. For this reason, the wedge product machinery introduced in Section~\ref{sec:mgis} is used exclusively in the proof of this case.

Once we take care of this case, we will essentially show that if standard signature $\Gamma$ is any wider, i.e. if $m>K+1$, then if $Z$ is a cluster of size greater than $2^K$ indexing columns of rank at least $2^K + 1$, we can always find a proper subcluster also indexing columns of rank at least $2^K+1$, or else the matchgate identities would erroneously imply that certain columns which are known to be linearly independent are linearly dependent.

We begin with the case of $m = K+1$.

\begin{thm}
	If $\Gamma$ is a $2^{\ell}\times 2^{K+1}$ pseudo-signature such that $\rank(\Gamma)\ge 2^{K} + 1$, then $\rank(\Gamma)= 2^{K+1}$.\label{thm:mini_rigid}
\end{thm}

\begin{proof}
	Because we assume implication~\eqref{eq:implication2} holds for $\kappa = K-1$, $\Gamma$ contains a $(K,m)$-cluster $Z = s\oplus\{e_1,...,\hat{e_j},...,e_{K+1}\}$ of linearly independent columns; denote the even indices of $Z$ by $Z_0$ and the odd ones by $Z_1$. Because $\rank(\Gamma) > 2^{K}$, there exists $t\not\in Z$ for which $\Gamma_t\not\in\sp(Z)$. Denote the parity of $t$ by $b\in\{0,1\}$, and denote by $\overbar{b}$ the opposite parity.

	Select any $t' = t\oplus e_{i^*}$ for ${i^*}\neq j$ and apply Lemma~\ref{lem:key} to $\zeta_0 = t'$, $\eta = t\oplus e_j$, $T = Z_{\overbar{b}}$ to conclude that $\Gamma_{t'}\not\in\sp({Z_{\overbar{b}}})$.

	Let $S_{d_1,d_2}$ denote the set of column indices $\zeta\not\in Z$ for which $\wt(\zeta\oplus t) = d_1$ and $\wt(\zeta\oplus t') = d_2$. Note that because $\wt(t\oplus t') = 1$, $S_{d_1,d_2}$ is empty if $|d_1-d_2|\neq 1$. 

	We will show by induction on $D$ that the columns indexed by $\left(\bigcup^{D}_{d=0}S_{d,d+1}\cup S_{d+1,d}\right)\cup Z$ are linearly independent for all $0\le D\le K$. The definition of $t$ and the argument above for $t'$ give the base case of $D = 0$.

	For the inductive step, for each $d$ let $d_0$ and $d_1$, respectively denote the even and odd value in $\{d,d+1\}$. As columns indexed by $S_{d_0,d_1}$ and columns indexed by $S_{d_1,d_0}$ have opposite parity, it suffices to show that the columns indexed by $T^0_D := \left(\bigcup^{D}_{d=0}S_{d_0,d_1}\right)\cup Z_b$ are linearly independent, and that the columns indexed by $T^1_D := \left(\bigcup^D_{d=0}S_{d_1,d_0}\right)\cup Z_{\overbar{b}}$ are linearly independent.

	Within this inductive step, we will further induct on the elements within $S_{D_0,D_1}$ and $S_{D_1,D_0}$. Specifically, suppose we have already proven that for some subset $S'_{D_0,D_1}\subset S_{D_0,D_1}$, all columns indexed by $T^0_{D-1}\cup S'_{D_0,D_1}$ are linearly independent, and that for $S'_{D_1,D_0} := \{u\oplus e_{i^*}: \ u\in S'_{D_0,D_1}\}\subset S_{D_1,D_0}$, all columns indexed by $T^1_{D-1}\cup S'_{D_1,D_0}$ are linearly independent. Select any $u\not\in S'_{D_0,D_1}$ and apply Lemma~\ref{lem:key} to $\zeta_0 = u$, $\eta = t'\oplus e_j$, $T = T^0_{D-1}\cup S'_{D_0,D_1}$ to see that $\Gamma_u\not\in\sp(T^0_{D-1}\cup S'_{D_0,D_1})$. Note that when $\wt(\zeta_0\oplus\eta)\ge 4$, we do not yet know that $\Gamma_{\zeta_0\oplus e_{i^*}} = \Gamma_{u\oplus e_{i^*}}$ lies outside $\sp(T^1_{D-1}\cup S'_{D_1,D_0})$, that is, we do not know whether all the columns indexed by the set $S$ defined in Lemma~\ref{lem:key} are linearly independent, but the second part of Lemma~\ref{lem:key} says that we may still conclude that $\Gamma_u\not\in\sp(T^0_{D-1}\cup S'_{D_0,D_1})$ because the columns indexed by $S\backslash\{u\oplus e_{i^*}\}$ are linearly independent.

	Lastly, apply Lemma~\ref{lem:key} to $\zeta_0 = u\oplus e_{i^*}$, $\eta = t\oplus e_j$, $T = T^1_{D-1}\cup S'_{D_1,D_0}$ to see that $\Gamma_{u\oplus e_{i^*}}\not\in\sp({T^1_{D-1}\cup S'_{D_1,D_0}})$. Note that here we only need to invoke the first part of Lemma~\ref{lem:key} we already know that $\Gamma_{\zeta_0\oplus e^*} = \Gamma_u$ lies outside $\sp(T^0_{D-1}\cup S'_{D_0,D_1})$.
\end{proof}

We are now ready to complete Inductive Step 1.

\begin{proof}[Completion of Inductive Step~\ref{lem:indstep1}]
	As we remarked earlier, implication \eqref{eq:implication2} for a fixed value of $\kappa$ implies implication \eqref{eq:implication1} for that value of $\kappa$, so we just need to show that implication \eqref{eq:implication1} also holds for $\kappa = K$.

	Suppose to the contrary that there exists pseudo-signature $\Gamma$ of rank $k$ such that $2^K+1\le k < 2^{K+1}$. Without loss of generality, we may assume that for all clusters $Z\subsetneq\{0,1\}^m$, $\rank(\Gamma_Z) \le 2^K$; otherwise, replace $\Gamma$ by $\Gamma_Z$ for some small enough cluster $Z$ such that $\rank(\Gamma_Z)\ge 2^K+1$ and $\rank(\Gamma_{Z'})\le 2^K$ for all subclusters $Z'\subsetneq Z$. Furthermore, by Theorem~\ref{thm:mini_rigid}, we may assume $m > K+1$.

	\begin{lem}
		If $Z = s+\{e_{p_1},...,e_{p_{K}}\}$ is a $(K, m)$-cluster of linearly independent columns in $\Gamma$, then any column $\Gamma_t$ for which $t_i = s_i$ for some $i\neq p_1,...,p_{K}$ lies in the span of the columns indexed by $Z$.
		\label{lem:inspan}
	\end{lem}

	\begin{proof}
		If to the contrary there existed such a $\Gamma_t$ not lying in the span of $Z$ so that $t_i = s_i$ for some $i\neq p_1,...,p_{K}$, then if $Z'$ is the $(m-1,m)$-cluster of column indices $\zeta$ for which $\zeta_i = s_i$, $\Gamma_{Z'}$ contains $t$ and all of $Z$ and thus has rank at least $2^{K} + 1$, contradicting our assumption on the ranks of the proper subclusters of $\Gamma$.
	\end{proof}

	By the inductive hypothesis, $\Gamma$ contains a $(K,m)$-cluster $Z$ of linearly independent columns $s+\{e_{p_1},...,e_{p_{K}}\}$. As $s$ is only uniquely defined modulo the bits in positions $p_1,...,p_K$, we will leave those bits of $s$ unspecified for now.

	Because $k>2^{K}$, there exists $t\not\in Z$ for which all columns indexed by $Z\cup\{t\}$ are linearly independent. Moreover, by Lemma~\ref{lem:inspan}, $t_j = \overbar{s_j}$ for all $j\not\in\{p_1,...,p_K\}$. Let $Z'$ denote the cluster $t + \{e_{p_1},...,e_{p_K}\}$. Set $s_i = t_i$ for $i\in\{p_2,...,p_K\}$; we will set $s_{p_1}$ to be 0 or 1 depending on the parity of the number $m-K$ of bits outside of positions $p_1,...,p_K$.

	\begin{case}
		$m-K$ is even.
	\end{case}

	Set $s_{p_1} = t_{p_1}$ so that $s$ and $t$ have the same parity.

	\begin{claim}
		If $u\not\in Z'$ and $u_i = s_i$ for each $i\in\{p_1,...,p_{K}\}$, then $\Gamma_{u}$ and $\Gamma_{s}$ are linearly dependent.\label{claim:001}
	\end{claim}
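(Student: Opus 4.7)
The plan is to induct on $d := |\{i \notin \{p_1, \ldots, p_K\} : u_i \neq s_i\}|$, the number of positions where $u$ differs from $s$ outside the cluster positions. The hypotheses $u \notin Z'$ together with $u_i = s_i$ on $\{p_1, \ldots, p_K\}$ force $0 \leq d \leq m - K - 1$ (the case $d = m - K$ would give $u = t \in Z'$). The base case $d = 0$ gives $u = s$, for which the claim is trivial. For $d \geq 1$, since $u$ agrees with $s$ at some position outside $\{p_1, \ldots, p_K\}$, Lemma~\ref{lem:inspan} yields $\Gamma_u \in \sp(Z)$, and the parity condition further restricts $\Gamma_u$ to the subspace spanned by those $\Gamma_z$ for $z \in Z$ sharing $u$'s parity.

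For the inductive step, split by the parity of $d$. When $d$ is even, $u$ has the same parity as $s$; I apply Corollary~\ref{cor:lindep} to $\zeta = u, \eta = s$, whose XOR has even weight $d$. The hypothesis demands that $\Gamma_{u \oplus e_q}$ be linearly dependent with $\Gamma_{s \oplus e_q}$ for each $q$ in the support of $u \oplus s$. Since $u \oplus e_q$ still satisfies the claim's hypotheses (it agrees with $s$ on $\{p_1, \ldots, p_K\}$ and cannot lie in $Z'$) and has $d$ decreased to $d - 1$ (odd), the inductive hypothesis gives $\Gamma_{u \oplus e_q}$ linearly dependent with $\Gamma_s$; because $u \oplus e_q$ has parity opposite to $s$ and $\Gamma_s \neq 0$, this in fact forces $\Gamma_{u \oplus e_q} = 0$, which is trivially linearly dependent with $\Gamma_{s \oplus e_q}$. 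The Corollary then concludes $\Gamma_u$ is linearly dependent with $\Gamma_s$.

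When $d$ is odd, $u$ has parity opposite to $s$, so linear dependence with $\Gamma_s$ is equivalent to $\Gamma_u = 0$. The plan is to apply Corollary~\ref{cor:lindep} with $\zeta = u$ and $\eta = s \oplus e_{p_j}$ for each $j \in \{1, \ldots, K\}$, so that $\zeta \oplus \eta$ has even weight $d + 1$. Verifying the Corollary's hypotheses for each $j$ would yield $\Gamma_u$ linearly dependent with $\Gamma_{s \oplus e_{p_j}}$ for every $j$; since the $\Gamma_{s \oplus e_{p_j}}$ are linearly independent as part of the basis $Z$, the only vector simultaneously lying in each of the one-dimensional spans $\sp(\Gamma_{s \oplus e_{p_j}})$ is zero, giving $\Gamma_u = 0$.

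The main obstacle is verifying the Corollary's hypotheses in this odd case: for each $q$ in the support of $(u \oplus s) \oplus e_{p_j}$ we need $\Gamma_{u \oplus e_q}$ linearly dependent with $\Gamma_{s \oplus e_{p_j} \oplus e_q}$. When $q \in D_s(u)$, the inductive hypothesis (even case) gives $\Gamma_{u \oplus e_q}$ as a scalar multiple of $\Gamma_s$, but $\Gamma_{s \oplus e_{p_j} \oplus e_q}$ disagrees with $s$ at $p_j \in \{p_1, \ldots, p_K\}$ and so does not directly fall under the claim's hypotheses. I expect this gap to be bridged either by strengthening the inductive claim to cover an arbitrary base point $z \in Z$ (using symmetry of the construction of $Z$ and a suitably adjusted $Z'$), or by a direct appeal to Corollary~\ref{cor:lindepprime} to transport established dependencies among columns of $Z \cup \{t\}$ into the required ones. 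This odd-case verification, including the chained application of Corollary~\ref{cor:lindep} across all $j$ to intersect the resulting one-dimensional spans, is the technical crux of the argument.
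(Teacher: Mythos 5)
Your plan diverges fundamentally from the paper's proof of Claim~\ref{claim:001}. The paper does not induct on Hamming distance and does not invoke Corollary~\ref{cor:lindep} here. Instead, for each $i \in \{p_1,\dots,p_K\}$ and $j \notin \{p_1,\dots,p_K\}$ it introduces the clusters $T_i = \{v : v_i = s_i\}$, $T_i^j = \{v : v_i = s_i,\ v_j = s_j\}$, and $Z_i = Z \cap T_i$. Since $T_i^j \subsetneq \{0,1\}^m$ is a proper cluster, the standing assumption that proper subclusters have rank at most $2^K$ together with the inductive hypothesis \eqref{eq:implication2} for $\kappa = K-1$ forces $\rank(\Gamma_{T_i^j}) \in \{2^{K-1}, 2^K\}$; the value $2^K$ is ruled out because it would give $\sp(T_i) = \sp(T_i^j) \subset \sp(Z)$ while $t \in T_i$ and $\Gamma_t \notin \sp(Z)$. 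Hence $\sp(T_i^j) = \sp(Z_i)$ for every $i$, and intersecting over $i \in \{p_1,\dots,p_K\}$ gives $\Gamma_u \in \bigcap_k \sp(Z_{p_k}) = \sp(\{s\})$ for every $u$ with $u_j = s_j$. The engine here is the rank dichotomy for proper subclusters, not the wedge/MGI machinery behind Corollary~\ref{cor:lindep}, which the paper reserves for Lemma~\ref{lem:key} and for the contradiction that is derived after Claim~\ref{claim:001}.

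The gap you flag in the odd case is genuine and the sketched fixes do not clearly close it. Strengthening the claim to an arbitrary base point $z = s \oplus e_{p_j}$ would require a column $t' \in Z'$ with $t'_i = z_i$ on $\{p_1,\dots,p_K\}$ known to lie outside $\sp(Z)$; the natural candidate $t \oplus e_{p_j}$ is not guaranteed to have this property, since only $\Gamma_t \notin \sp(Z)$ is given and no parity or MGI argument transfers that to $t \oplus e_{p_j}$. Concretely, already at $d=1$ your verification of Corollary~\ref{cor:lindep} for $\zeta = s \oplus e_{q_0}$, $\eta = s \oplus e_{p_j}$ reduces to showing $\Gamma_{s \oplus e_{q_0} \oplus e_{p_j}}$ is linearly dependent with $\Gamma_s$, an index that violates the hypothesis $u_{p_j} = s_{p_j}$ and so falls outside the reach of your induction; it also does not follow from parity (both have the parity of $s$) or from $\Gamma_{s\oplus e_{q_0}\oplus e_{p_j}} \in \sp(Z)$ (which only localizes it in the $2^{K-1}$-dimensional even part of $\sp(Z)$). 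Since your even step at level $d$ explicitly consumes the odd step at $d-1$, the whole alternating induction collapses without the odd half. The paper's argument avoids this entirely because its conclusion $\Gamma_u \in \sp(\{s\})$ is delivered uniformly for all parities in one shot; the parity condition is applied only afterwards to deduce $\Gamma_{s\oplus e_j}=0$.
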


	\begin{proof}
		For each $i\in\{p_1,...,p_{K}\}$ and $j\not\in\{p_1,...,p_{K}\}$, let $T_i$ denote the cluster of all column indices $u$ for which $u_i = s_i$, and let $T^j_i$ denote the cluster of all column indices $u$ for which $u_i = s_i$ and $u_j = s_j$. Let $Z_i = Z\cap T_i$; obviously $Z_i\subset T^j_i\subset T_i$.


		Because $T^j_i$ is a cluster properly contained in $\{0,1\}^m$, we inductively know that $\rank(\Gamma_{T^j_i})\le 2^K$. And because $Z_i\subset T^j_i$, $\rank(\Gamma_{T^j_i})\ge 2^{K-1}$. But if $\rank(\Gamma_{T^j_i})\ge 2^{K-1}+1$, then by inductive hypothesis \eqref{eq:implication2} applied to $\Gamma_{T^j_i}$ for $\kappa = K-1$, $\rank(\Gamma_{T^j_i})\ge 2^K$. In other words, $\rank(\Gamma_{T^j_i})$ is either $2^{K-1}$ or $2^K$. 

		We will show that the latter is impossible. Suppose to the contrary that $\rank(\Gamma_{T^j_i}) = 2^{K}$.

		Then because $T^j_i\subset T_i$ and $\rank(\Gamma_{T_i}) = 2^K = \rank(\Gamma_{T^j_i})$, it follows that $\sp(T_i)=\sp(T^j_i)$. For any $u\in T^j_i$, $\Gamma_u\in\sp(Z)$ by Lemma~\ref{lem:inspan}, so $$\sp(Z)\supset\sp(T^j_i) = \sp(T_i).$$ But $T_i$ contains $t$, and by definition $\Gamma_t\not\in\sp(Z)$, a contradiction.

		We conclude that $\rank(\Gamma_{T^j_i}) = 2^{K-1}$. Then because $Z_i\subset T^j_i$ and $\rank(\Gamma_{Z_i}) = 2^{K-1} = \rank(\Gamma_{T^j_i})$, it follows that $\sp(Z_i) = \sp(T^j_i)$.

		In particular, all columns indexed by $\bigcap^K_{k=1}T^j_{p_k}$ lie in $\bigcap^K_{k=1}\sp(Z_{p_k}) = \sp(\{s\})$. Our choice of $j$ was arbitrary, so we get the desired claim.
	\end{proof}

	From the above claim and the fact that we're assuming $m > K+1$, we conclude that $\Gamma_{s\oplus e_j}$ for any $j\not\in\{p_1,...,p_{K}\}$ lies in the span of $\Gamma_s$. But $s$ and $s\oplus e_j$ are of opposite parity, so by the parity condition, $\Gamma_{s\oplus e_j} = 0$ for all such $j$. Applying Corollary~\ref{cor:lindep} to $s$ and $t$, it follows that $\Gamma_s$ and $\Gamma_{t}$ are linearly dependent, a contradiction.

	\begin{case}
		$m-K$ is odd.
	\end{case}

	Set $s_{p_1} = \overbar{t_{p_1}}$ so that $s$ and $t$ have the same parity.

	\begin{claim}
		For any $u\in\{0,1\}^m$, if $u\not\in Z'$ and $u_i = s_i$ for all $i\in\{p_2,...,p_{K}\}$, then:

		\begin{enumerate}
		 	\item If $u$ and $s$ have the same parity, then $\Gamma_{u}$ and $\Gamma_{s}$ are linearly dependent.

		 	\item If $u$ and $s$ have the opposite parity, then $\Gamma_u$ and $\Gamma_{s\oplus e_{p_1}}$ are linearly dependent.
		 \end{enumerate} \label{claim:002}
	\end{claim}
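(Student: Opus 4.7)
The plan is to mirror the intersection argument used to establish Claim~\ref{claim:001}, but intersecting over $\{p_2,\ldots,p_K\}$ rather than the full set $\{p_1,\ldots,p_K\}$. Dropping one constraint will enlarge the subspace in which $\Gamma_u$ is forced to lie from the one-dimensional $\sp(\Gamma_s)$ up to the two-dimensional $\sp(\Gamma_s,\Gamma_{s\oplus e_{p_1}})$, and the parity condition will then split this pair according to the parity of $u$ to produce the two sub-cases.

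For each $i\in\{p_2,\ldots,p_K\}$ and each $j\notin\{p_1,\ldots,p_K\}$, I will set up $T_i$, $T_i^j$, and $Z_i = Z\cap T_i$ exactly as in Claim~\ref{claim:001}. The same rank dichotomy goes through verbatim: by the inductive hypothesis \eqref{eq:implication2} with $\kappa=K-1$ we have $\rank(\Gamma_{T_i^j})\in\{2^{K-1},2^K\}$, and the value $2^K$ is ruled out because it would force $\sp(T_i)=\sp(T_i^j)\subset\sp(Z)$, contradicting $\Gamma_t\notin\sp(Z)$ together with $t\in T_i$. The crucial point enabling $t\in T_i$ in Case 2 is our choice $s_i=t_i$ for $i\in\{p_2,\ldots,p_K\}$, and this is precisely why the claim omits $i=p_1$: because $s_{p_1}=\overline{t_{p_1}}$ in this case. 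I therefore obtain $\sp(T_i^j)=\sp(Z_i)$ for every $i\in\{p_2,\ldots,p_K\}$.

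Next, for any $u$ satisfying the hypotheses of Claim~\ref{claim:002}, I will locate a coordinate $j\notin\{p_1,\ldots,p_K\}$ with $u_j=s_j$, which places $u\in T_{p_k}^j$ and hence $\Gamma_u\in\sp(Z_{p_k})$ for each $k\in\{2,\ldots,K\}$. Such a $j$ must exist: if not, then $u_j=\overline{s_j}=t_j$ for every $j\notin\{p_1,\ldots,p_K\}$, and together with $u_i=s_i=t_i$ for $i\in\{p_2,\ldots,p_K\}$ this would force $u\in\{t,t\oplus e_{p_1}\}\subset Z'$, contradicting $u\notin Z'$. Expanding uniquely as $\Gamma_u=\sum_{v\in Z}c_v\Gamma_v$, each containment $\Gamma_u\in\sp(Z_{p_k})$ forces $c_v=0$ whenever $v\notin Z_{p_k}$; intersecting these constraints over $k=2,\ldots,K$ collapses the support of $c$ onto $\bigcap_{k=2}^K Z_{p_k}=\{s,s\oplus e_{p_1}\}$, so $\Gamma_u\in\sp(\Gamma_s,\Gamma_{s\oplus e_{p_1}})$.

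Finally, since $s$ and $s\oplus e_{p_1}$ have opposite parities, the parity condition applied to $\Gamma_u$ forces exactly one of the two coefficients in this expansion to vanish: the coefficient of $\Gamma_{s\oplus e_{p_1}}$ when $u$ has the same parity as $s$, and the coefficient of $\Gamma_s$ when $u$ has the opposite parity. This yields the two sub-cases. I do not anticipate any significant obstacle; the only non-mechanical checks are (a) that the asymmetric choice of $s_{p_1}$ in Case 2 still permits $t\in T_i$ for each $i\in\{p_2,\ldots,p_K\}$, and (b) that the hypothesis $u\notin Z'$ suffices to locate the coordinate $j$. The remainder is a direct adaptation of Claim~\ref{claim:001}.
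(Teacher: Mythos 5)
Your proposal is correct and follows essentially the same route as the paper: set up $T_i$, $T_i^j$, $Z_i$ as in Claim~\ref{claim:001} but only for $i\in\{p_2,\ldots,p_K\}$, run the identical rank dichotomy to get $\sp(T_i^j)=\sp(Z_i)$, intersect to land in $\sp(\{s,s\oplus e_{p_1}\})$, and finish with the parity condition. One small thing worth noting: the paper tacitly assumes that a given $u\notin Z'$ satisfying $u_i=s_i$ for $i\in\{p_2,\ldots,p_K\}$ lies in some $T_{p_k}^j$, whereas you explicitly verify the existence of a coordinate $j\notin\{p_1,\ldots,p_K\}$ with $u_j=s_j$ (otherwise $u\in\{t,t\oplus e_{p_1}\}\subset Z'$); this fills in a detail the paper glosses over, but the argument is the same.
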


	\begin{proof}
		The proof is the same as that of Claim~\ref{claim:001}, the only subtlety being that $s$ and $t$ now only necessarily agree on bits $p_2,...,p_{K}$. By the argument there, all columns indexed by $T^j_i$ lie in $\sp(Z_i)$ for $i = p_2,...,p_{K}$. In particular, for all $j\not\in\{p_1,...,p_K\}$, all columns indexed by $\bigcap^{K}_{k=2}S^j_{p_k}$ lie in $\bigcap^{K}_{k=2}\sp(Z_{p_k}) = \sp(\{s,s\oplus e_{p_1}\})$.

		So given $u\not\in Z'$, write $\Gamma_u = \alpha\Gamma_s + \beta\Gamma_{s\oplus e_{p_1}}$. If $u$ and $s$ have the same parity, $\beta = 0$ by the parity condition, so $\Gamma_u\in\sp(\{s\})$. If $u$ and $s$ have the opposite parity, $\alpha = 0$ by the parity condition, so $\Gamma_u\in\sp(\{s\oplus e_{p_1}\})$.
	\end{proof}

	Pick any $j\not\in\{p_1,...,p_K\}$ and define $s^* = s\oplus e_{j}$ and $t^* = t\oplus e_j$. $s^*$ and $t^*$ both satisfy the hypotheses of Claim~\ref{claim:002} and have parity opposite to that of $s$, so by the latter case of Claim~\ref{claim:002}, they are both linearly dependent with $\Gamma_{s\oplus e_{p_1}}$. But $\Gamma_{s\oplus e_{p_1}}\neq 0$ because $s\oplus e_{p_1}\in Z$ and the columns indexed by $Z$ are linearly independent, so $\Gamma_{s^*}$ and $\Gamma_{t^*}$ are linearly dependent with each other.

	To show $\Gamma_s$ and $\Gamma_t$ are linearly dependent, we wish to apply Corollary~\ref{cor:lindepprime} to $s^*, t^*$, noting that $s^*\oplus t^* = e_j\oplus\sum_{j'\not\in\{j,p_2,...,p_k\}}e_j$.

	For any $j'\not\in\{j,p_2,...,p_K\}$, note that $s^*\oplus e_{j'}$ and $t^*\oplus e_{j'}$ both satisfy the hypotheses of Claim~\ref{claim:002} and have the same parity as $s$, so by the former case of Claim~\ref{claim:002}, they are both linearly dependent with $\Gamma_s$. But $\Gamma_{s}\neq 0$ because $s\in Z$ and the columns indexed by $Z$ are linearly independent, so $\Gamma_{s^*\oplus e_{j'}}$ and $\Gamma_{t^*\oplus e_{j'}}$ are linearly dependent with each other.

	Applying Corollary~\ref{cor:lindepprime} to $s^*$ and $t^*$, it follows that $\Gamma_s$ and $\Gamma_t$ are linearly dependent, a contradiction.
\end{proof}

\subsection{Existence of Cluster Submatrix}
\label{subsubsec:collapse-exist}

\begin{proof}[Completion of Inductive Step~\ref{lem:indstep2}]
	As in inductive step 1, we may assume without loss of generality that for all clusters $Z\subsetneq\{0,1\}^m$, $\rank(\Gamma_Z)\le 2^K$. If $m = K+1$, then by the inductive hypothesis that \eqref{eq:implication1} holds for $\kappa = K$, we're done. So suppose $m > K+1$.

	By the second part of the inductive hypothesis, implication~\eqref{eq:implication2} holds for $1\le\kappa\le K-1$, so $\Gamma$ contains a $(K,m)$-cluster $Z$ of linearly independent columns $s + \{e_{p_1},...,e_{p_{K}}\}$.

	As in inductive step 1, we can apply Lemma~\ref{lem:inspan} to $Z$ to see that all columns outside the span of the columns indexed by $Z$ must be indexed by $Z' = t + \{e_{p_1},...,e_{p_{K}}\}$, where $t = \left(s\oplus\bigoplus_{i\neq p_1,...,p_{K}}e_i\right)$. But $|Z'| = |Z| = 2^{K}$, and $\rank(\Gamma)\ge 2^{K+1}$ by implication~\eqref{eq:implication1} for $\kappa = K$, so the columns indexed by $Z\cup Z'$ are linearly independent. Because $m > K+1$, there exist columns not indexed by either $Z$ or $Z'$, and by Lemma~\ref{lem:inspan} applied once to $Z$ and once to $Z'$, these columns are in both $\sp(Z)$ and $\sp(Z')$ and thus must be zero.

	If $s$ and $t$ are of the same parity, apply Corollary~\ref{cor:lindep} to $s$ and $t$ to find that $\Gamma_s$ and $\Gamma_{t}$ are linearly dependent, a contradiction.

	If $s$ and $t$ are of opposite parity, apply Corollary~\ref{cor:lindepprime} to $s\oplus e_j$ and $t\oplus e_j\oplus e_{p_1}$ for any $j\not\in\{p_1,...,p_{K}\}$ to find that $\Gamma_s$ and $\Gamma_{t\oplus e_{p_1}}$ are linearly dependent, a contradiction.
\end{proof}

\section{Group Property of Standard Signatures}
\label{sec:collapse-group}

We will now prove the following generalization of the group property result over domain size 4 due to Cai and Fu (Theorem 5.5, \cite{caifu}):

\begin{thm}
	If $G$ is a generator matchgate of arity $Kn$ with standard signature $\underline{G}$, and $\rank(\underline{G}(t)) = 2^K$ for some $t$, then there exists a recognizer matchgate of arity $Kn$ such that $\underline{G}(t)\underline{R}(t) = I_{2^K}$.\label{thm:main_step}
\end{thm}

Roughly, we invoke Theorem~\ref{thm:sub} to obtain a full-rank $K$-cluster submatrix $G'$ of $\underline{G}(t)$ with column indices belonging to cluster $\zeta + \{e_{p_1},...,e_{p_K}\}$. Assume without loss of generality that $\zeta_{p_i} = 0$ for all $i\in[K]$. We will show that the matrix obtained by replacing $G'$ in $\underline{G}(t)$ with $(G')^{-1}$ and the remaining entries with zeroes is the standard signature of some arity-$Kn$ recognizer.

We first fix some notation. Denote $\underline{G}(t)$ by $\Gamma$. Suppose that nodes $p_1<\cdots <p_m \in [Kn]$ belong to blocks before the $t$-th, and nodes $p_{m+1}<\cdots<p_K\in [Kn]$ belong to blocks after the $t$-th. For expository purposes, we wish to use a particular permutation $(q_1,...,q_K)$ of $(p_1,...,p_K)$, so for $i\le m$, let $q_i = p_{m-i+1}$, and for $i>m$, let $q_i = p_{K+m-i+1}$ (see Figure~\ref{fig:toy}). If the column indices of $\Gamma$ are of the form $i_1\cdots i_{K(n-1)}$, those of $G'$ are of the form $i_{p_1}\cdots i_{p_K}$.

In \cite{groupprop}, Li and Xia gave a constructive proof that in the character theory of matchgates, the $2^{K}\times 2^{K}$ \emph{character} matrices of invertible $K$-input, $K$-output matchgates form a group under matrix multiplication. One can check that their construction carries over to show that the $2^{K}\times 2^{K}$ standard signatures of such matchgates likewise form a group, but unfortunately this is not enough to prove Theorem~\ref{thm:main_step}, as $G'$ alone is merely a pseudo-signature and may not be realizable as the standard signature of a $K$-input, $K$-output matchgate. That said, Theorem~\ref{thm:main_step} can still be proved with minor modifications to Li and Xia's approach.

We begin with a toy example motivating the notation in the previous paragraphs. Suppose that for each $i\in[K]$, there exists an edge of weight 1 such that the $i$-th external node in block $t$ and external node $q_i$ are both incident only to this edge. Note that in this case, $G'$ is a symmetric permutation matrix and thus equal to its own inverse. We can easily construct a recognizer $R$ out of $G$ for which $\underline{G}(t)\underline{R}(t) = I_{2^K}$ as follows. Remove all non-external nodes of $G$, as well as all edges incident to nodes outside of block $t$ and nodes $q_1,...,q_K$. For external node $i$ outside of block $t$ such that $i\not\in\{q_1,...,q_K\}$, if $\zeta_i = 0$, attach a distinct edge of weight 1 to node $i$ and designate the other endpoint of the edge as the $i$th external node of $R$; if $\zeta_i = 1$, attach a distinct path graph of length 2 consisting of two edges of weight 1, and denote the other endpoint of the path as the $i$th external node of $R$. By construction, in the $2^{K(n-1)}\times 2^{K}$ matrix $\underline{R}(t)$, the submatrix indexed by rows $q_1,...,q_K$ is equal to $G'$, and all other entries are zero. Because $G' = (G')^{-1}$, $\underline{G}(t)\underline{R}(t)= I_{2^K}$ as desired.

See Figure~\ref{fig:toy} for an example of this construction.

\begin{figure}[h]
\centering
	\includegraphics[height=0.6\textheight]{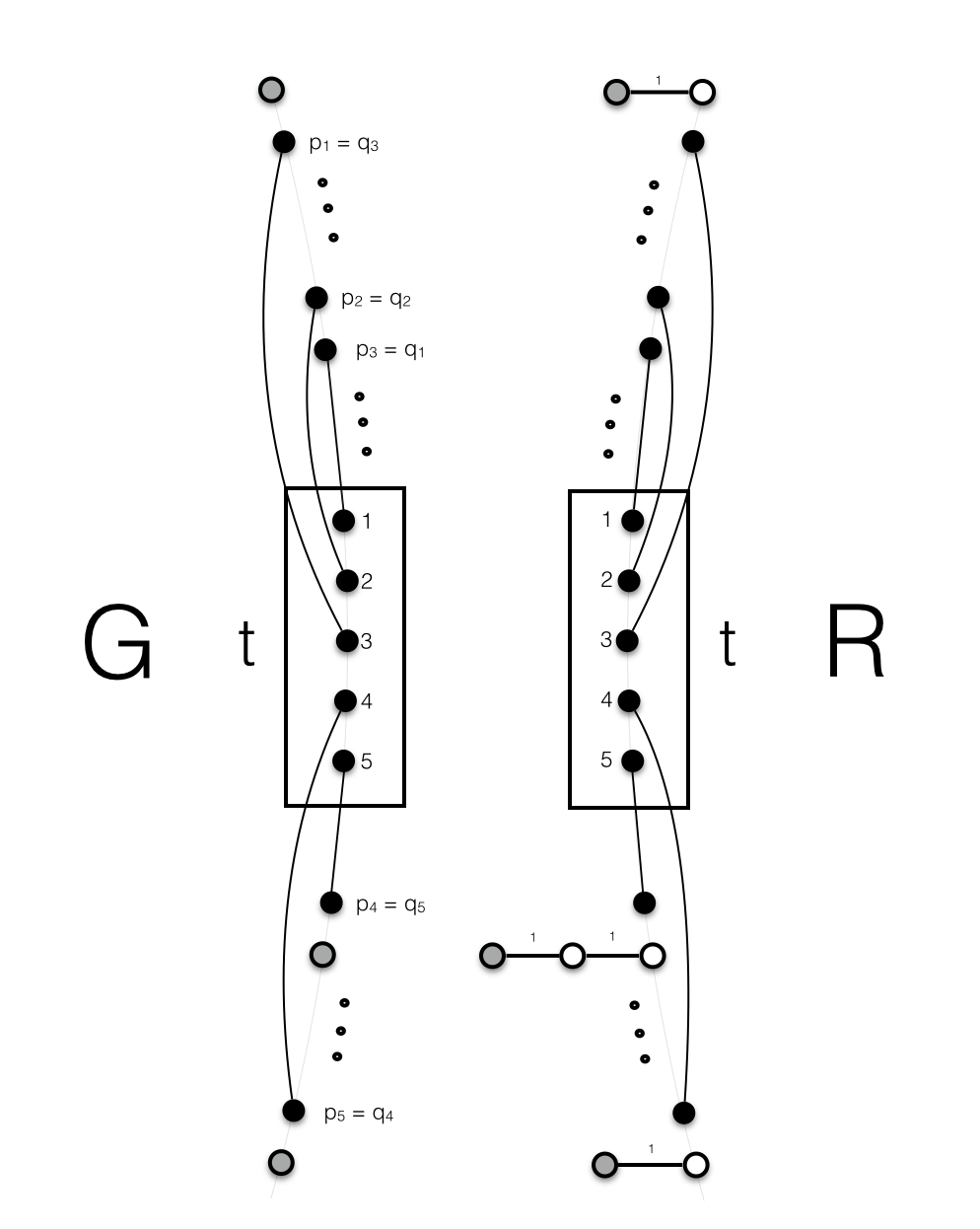}
	\caption{Toy example of $G$ reduced at $i = 1,...,K$. Here, $\ell = 5$, $K = 5$, $m = 3$. Black nodes in $G$ denote nodes $1,...,K$ in block $t$ and nodes $q_1,...,q_K$. External nodes of $G$ and $R$ are shown in black/gray.}
	\label{fig:toy}
\end{figure}

\begin{defn}
 	For $i\in[K]$, if $i\le m$ (resp. $i>m$), $G$ is \emph{reduced at $i$} if there exists an edge of weight 1 in $G$ connecting the $i$-th external node in block $t$ and external node $q_i$ such that these two nodes are both incident only to that edge.
 \end{defn}

To prove Theorem~\ref{thm:main_step}, it is enough to reduce to the special case of the toy example above where $\Gamma$ is realizable by a generator $G$ reduced at all $i\in[K]$. The rest of this section will be dedicated to proving the following:

\begin{lem}
	If $\Gamma$ is the standard signature of a generator of arity $Kn$ reduced at $1,...,i$, there exist nonsingular $K$-input, $K$-output transducers $L_1,...,L_r$ and $K(n-1)$-input, $K(n-1)$-output transducers $R_1,...,R_s$ such that $\underline{L_r}\cdots \underline{L_1}\cdot\Gamma\cdot \underline{R_1}\cdots \underline{R_s}$ is the standard signature of a generator reduced at $1,...,i+1$.\label{lem:transform}
\end{lem}

We first give a sufficient characterization of standard signatures of matchgates reduced at $1,...,i$ in terms of the entries of their standard signatures.

The following terminology is borrowed from \cite{groupprop}.

\begin{defn}
	Let $M$ be any $2^r\times 2^c$ matrix whose rows and columns are indexed by $\sigma\in\{0,1\}^r$ and $\tau\in\{0,1\}^{c}$ respectively. $\Gamma^{\sigma}_{\tau}$ is an \emph{edge entry of $M$} iff $r+c-2\le \wt(\sigma) + \wt(\tau) < r+c$.
\end{defn}

\begin{lem}
	$\Gamma$ is the standard signature of a generator $G$ that is reduced at $i$ if $\Gamma$ satisfies the following:

	\begin{enumerate}
	 	\item $(G')^{1^{K}}_{1^{K}} = (G')^{1^{K}\oplus e_{i}}_{1^{K}\oplus e_{q_i}} = 1$.

	 	\item $(G')^{\sigma}_{\tau} = 0$ for all other edge entries of $G'$ for which $\sigma\in\{1^K,1^K\oplus e_i\}$ or $\tau\in\{1^K,1^K\oplus e_{q_i}\}$.
	 \end{enumerate}
	 \label{lem:red_entries}
\end{lem}

To show this, it suffices to prove the following useful consequence of the matchgate identities, first observed in \cite{caichoudharylu} and translated below to our setting of standard signatures in matrix form.

\begin{lem}[Theorem 4.2, \cite{caichoudharylu}]
	If $(G')^{1^K}_{1^K}\neq 0$, the entries of $G'$ are uniquely determined by $(G')^{1^K}_{1^K}$ and the edge entries of $G'$.
	\label{lem:sig_dep}
\end{lem}

\begin{proof}
	Assume that these entries uniquely determine all entries $\Gamma^{\sigma}_{\tau}$ for which $\wt(\sigma)+\wt(\tau)\ge m$ for some $m\le n - 2$. We proceed by downward induction on $m$ (by the parity condition, if $m$ is even, the case of $m+1$ follows immediately from that of $m$). Then for $\sigma,\tau\in\{0,1\}^K$ such that $\wt(\sigma)+\wt(\tau) = m - 2$, apply \eqref{eq:mgis} from Theorem~\ref{thm:mgis} to $\sigma := \sigma$, $\zeta := \tau$, $\tau := 1^K$, and $\eta := 1^K$. One can check that the resulting identity consists of $(G')^{\sigma}_{\tau}\cdot(G')^{1^K}_{1^K}$ and terms which have already been uniquely determined by the inductive hypothesis, so because $(G')^{1^K}_{1^K}\neq 0$, we conclude that $(G')^{\sigma}_{\tau}$ is also uniquely determined.
\end{proof}

\begin{obs}
	If $\Gamma$ is the standard signature of a generator reduced at $i$, $(G')^{\sigma\oplus e_i}_{\tau\oplus e_{q_i}} = (G')^{\sigma}_{\tau}$, and if $\sigma_i\neq\tau_{q_i}$, $(G')^{\sigma}_{\tau} = 0$.\label{obs:block}
\end{obs}

\begin{proof}
	By hypothesis, external node $i$ of block $t$ and external node $q_i$ are only connected to each other. If $\sigma_i\neq\tau_{q_i}$, $(G')^{\sigma}_{\tau}$ counts the number of perfect matchings of $\Gamma$ where, among other conditions, exactly one of these two nodes is removed, and no such matching exists. On the other hand, if $\sigma_i = \tau_{q_i}$, then $(G')^{\sigma}_{\tau}$ and $(G')^{\sigma\oplus e_i}_{\tau\oplus e_{q_i}}$ count the number of perfect matchings in which, among other conditions, both of these two nodes are removed, or neither is. The number of perfect matchings in either scenario is the same.
\end{proof}

Let $G'_i$ be the $2^{K-i}\times 2^{K-i}$ submatrix of $G'$ consisting of entries $(G')^{\sigma}_{\tau}$ for which $\sigma_j = 0$ for $j = 1,...,i$ and $\tau_j = 0$ for $j = q_1,...,q_i$. If the row and column indices of $G'$ are of the form $i_1\cdots i_K$ and $i_{p_1}\cdots i_{p_K}$ respectively, the row and column indices of $G'_i$ are of the form $i_{i+1}\cdots i_K$ and $i_{p_{i+1}}\cdots i_{p_K}$ respectively. When referring to the row (resp. column) of $G'$ containing a row $i_{i+1}\cdots i_K$ (resp. column $i_{p_{i+1}}\cdots i_{p_K}$) of $G'_i$, we use the notation $0^i\circ i_{i+1}\cdots i_K$ (resp. $0^i\circ i_{p_{i+1}}\cdots i_{p_K}$) to denote its index in $G'$. For example, column $0^i\circ 1^{K-i}$ of $G'$ is the column of $G'$ indexed by $\sigma\in\{0,1\}^K$ for which $\sigma_{q_1} = \cdots = \sigma_{q_i} = 0$ and $\sigma_{q_{i+1}} = \cdots = \sigma_{q_K} = 1$, and this contains column $1^{K-i}$ of $G'_i$.

\begin{cor}
	If $\Gamma$ is the standard signature of a generator that is reduced at $1,...,i$, then $\Gamma$ is the standard signature of a generator reduced at $1,...,i+1$ if $\Gamma$ satisfies the following:

	\begin{enumerate}
	 	\item $(G'_{i})^{1^{K-i}}_{1^{K-i}} = (G'_{i})^{1^{K-i}\oplus e_{i+1}}_{1^{K-i}\oplus e_{q_{i+1}}} = 1$
	 	\item $(G'_i)^{\sigma}_{\tau} = 0$ for all other edge entries of $G'_i$ such that $\sigma\in\{1^{K-i},1^{K-i}\oplus e_{i+1}\}$ or $\tau\in\{1^{K-i},1^{K-i}\oplus e_{q_{i+1}}\}$.
	 \end{enumerate} 
\end{cor}

\begin{proof}
	Apply Observation~\ref{obs:block} to each of $1,...,i$, and invoke Lemma~\ref{lem:red_entries}.
\end{proof}

\begin{proof}[Proof of Lemma~\ref{lem:transform}]
	We execute the transformation $\Gamma =\Gamma^{(0)} \Rightarrow\Gamma^{(1)}\Rightarrow\Gamma^{(2)}\Rightarrow\Gamma^{(3)}\Rightarrow\Gamma^{(4)}$ outlined below.

	\begin{enumerate}
		\item ($\Gamma^{(0)}\Rightarrow \Gamma^{(1)}$): Turn the entry indexed by $(1^{K-i},1^{K-i})$ in $G'_i$ to 1.
		\item ($\Gamma^{(1)}\Rightarrow \Gamma^{(2)}$): Turn edge entries of $G'_i$ in row or column $1^{K-i}$ to 0.
		\item ($\Gamma^{(2)}\Rightarrow \Gamma^{(3)}$): Turn entry $(1^{K-i}\oplus e_{i+1},1^{K-i}\oplus e_{q_{i+1}})$ in $G'_i$ to 1.
		\item ($\Gamma^{(3)}\Rightarrow \Gamma^{(4)}$): Turn all other edge entries in $G'_i$ in row $1^{K-i}\oplus e_{i+1}$ or column $1^{K-i}\oplus e_{q_{i+1}}$ to zero. 
	\end{enumerate}

	We need not care what these transformations do to entries outside of $G'$, but we must ensure they preserve the fact that $\Gamma$ is the standard signature of a generator reduced at $1,...,i$. To do this, for each matrix $M$ by which we left- or right-multiply $\Gamma$, if $\sigma$ does not index a row (resp. column) of $G'_i$, the only nonzero entry of $M$ in row (resp. column) $\sigma$ will be 1 in column (resp. row) $\sigma$.

	In each step $j$, we will for convenience refer to $\Gamma^{(j-1)}$ as $\Gamma$.

	\begin{figure}[h]
		\centering
		\subfloat[][$C_j$]{\includegraphics[height=2.5in]{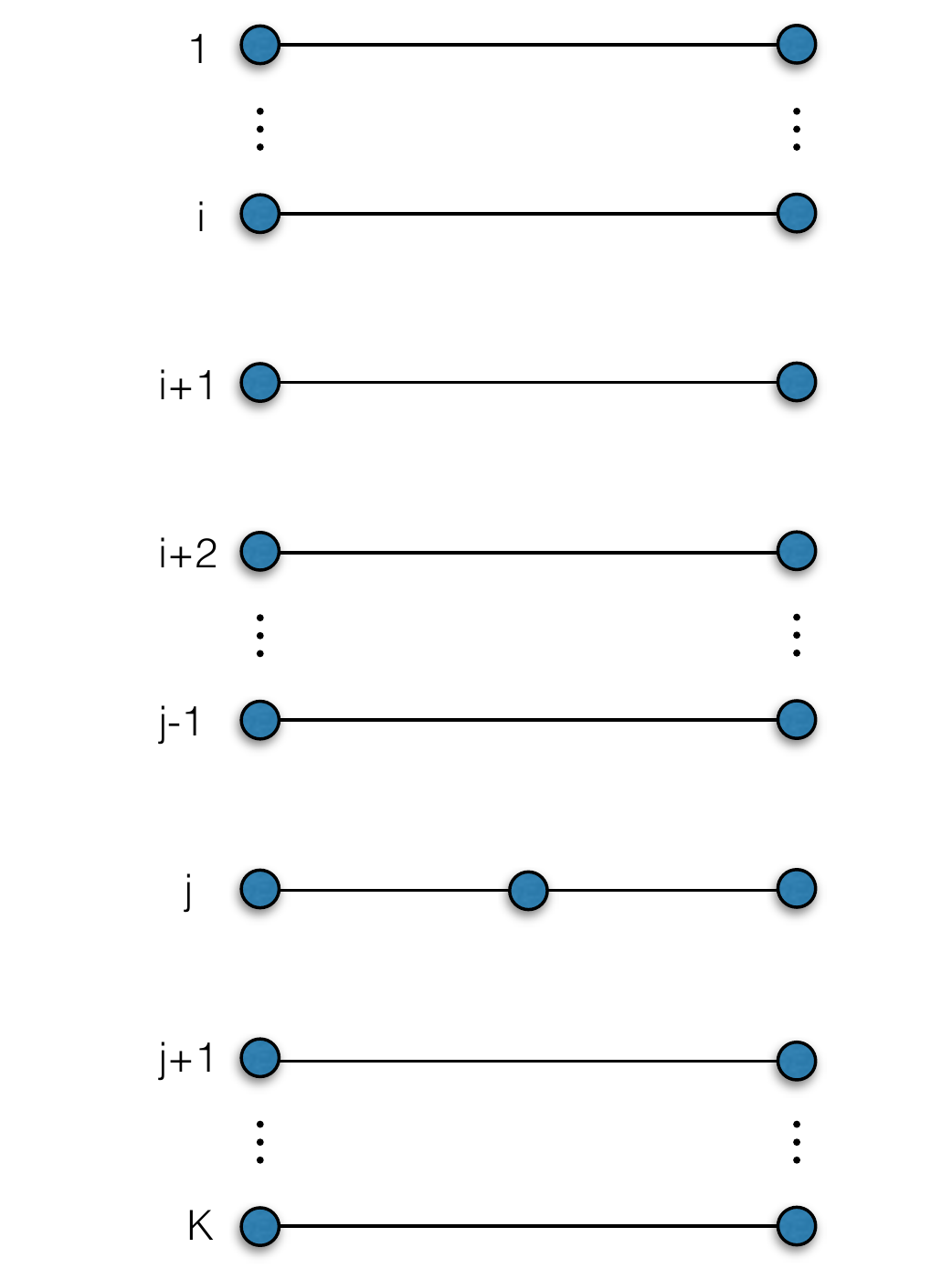}\label{fig:1}}\qquad
		\subfloat[][$L^{j,k}_3$]{\includegraphics[height=2.5in]{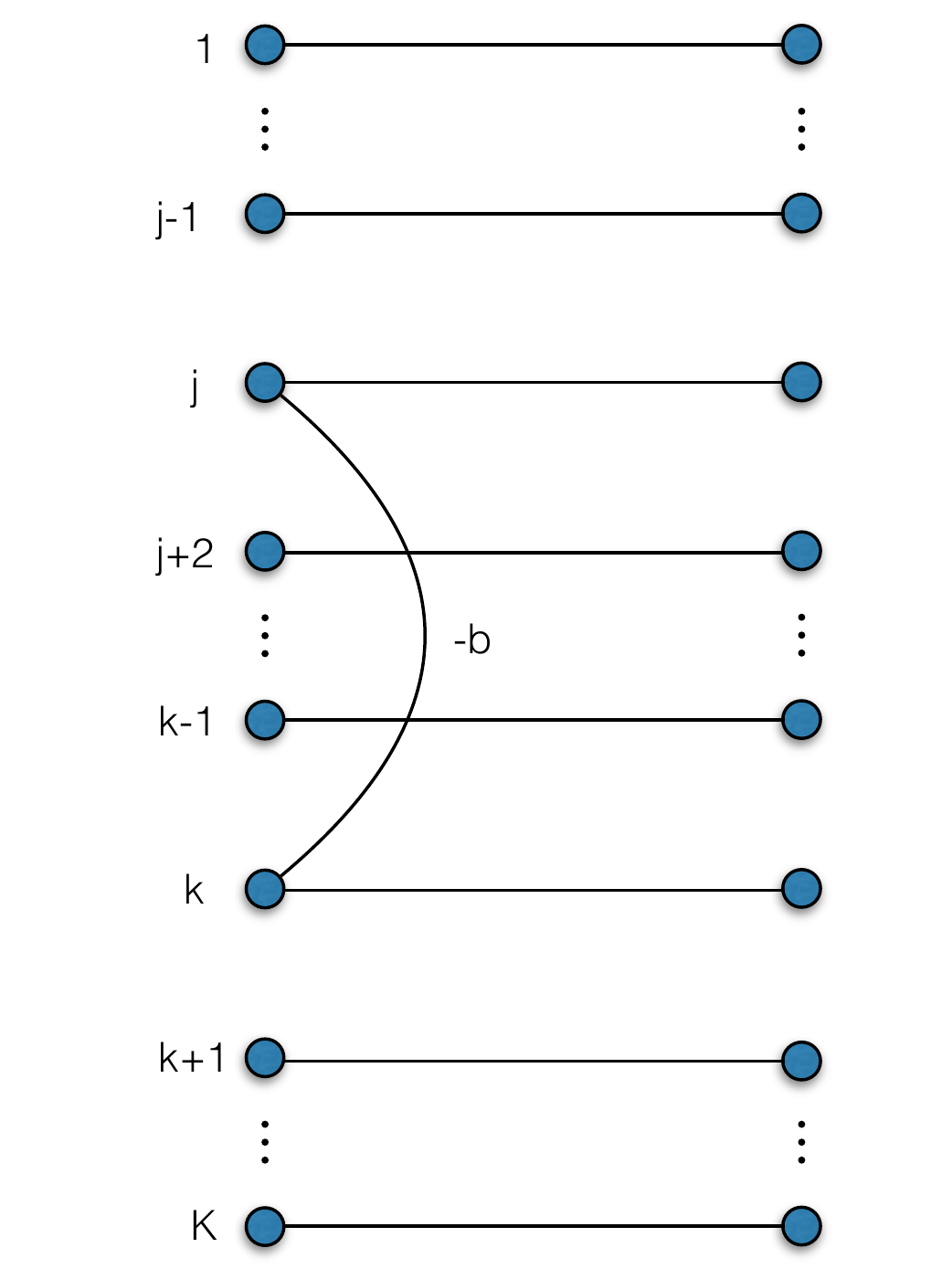}\label{fig:2}}\qquad
		\\
		\vspace{1.0cm}
		\subfloat[][$L_4$]{\includegraphics[height=2.5in]{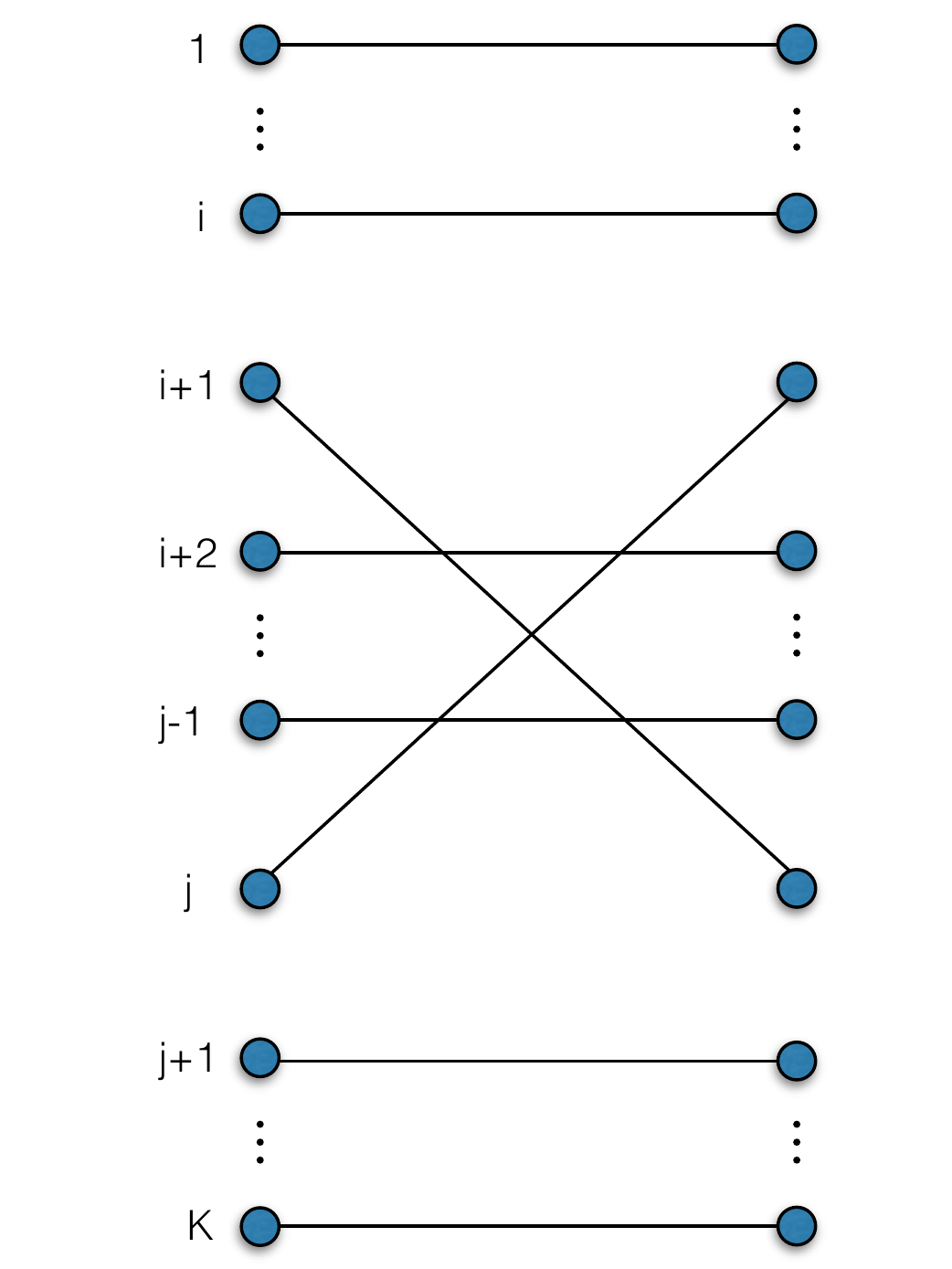}\label{fig:3}}\qquad
		\subfloat[][$L^j_5$]{\includegraphics[height=2.5in]{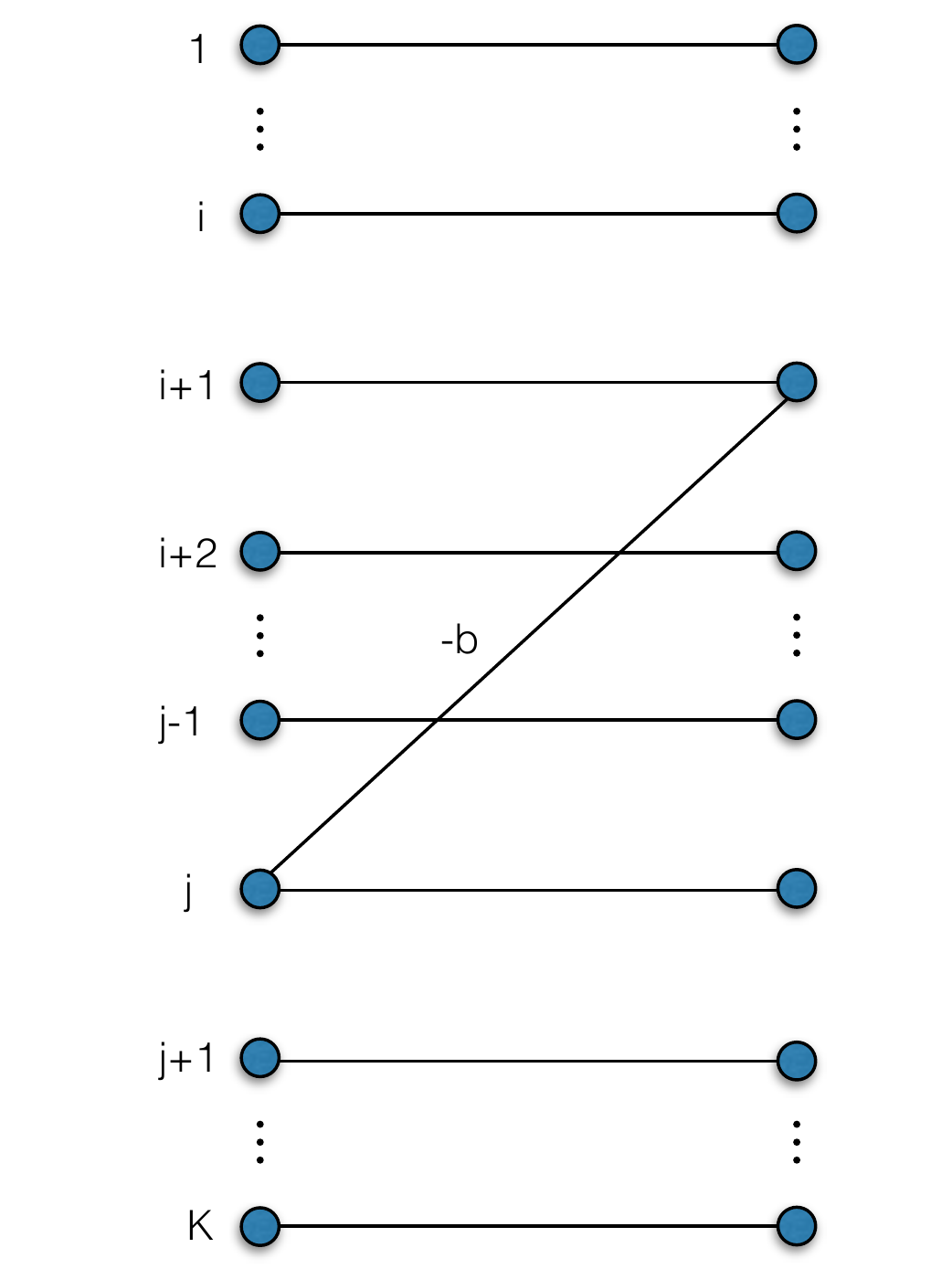}\label{fig:4}}
		\caption{Transducers realizing row/column operations in steps 1-4}
	\end{figure}

	\begin{step}
		($\Gamma^{(0)}\Rightarrow \Gamma^{(1)}$): Turn the entry indexed by $(1^{K-i},1^{K-i})$ in $G'_i$ to 1.
	\end{step}

	We first show how to move a nonzero entry $c:=(G'_i)^{\sigma^*}_{\tau^*}$ of $G'_i$ into entry $(1^{K-i},1^{K-i})$ of $G'_i$.

	For each $j$ for which $i<j\le K$, we would like a $2^{K}\times 2^K$ standard signature $C_j$ such that left-multiplication of $\Gamma$ by $C_j$ interchanges row $\sigma$ in $G'$ with $\sigma\oplus e_j$ for all $\sigma\in\{0,1\}^K$, and a $2^{K(n-1)}\times 2^{K(n-1)}$ standard signature $D_j$ such that right-multiplication of $\Gamma$ by $D_j$ interchanges column $\tau$ in $G'$ with $\tau\oplus e_{q_{j}}$. We could then define $L_1 = \prod_{j:\sigma^*_j = 0}C_j$ and $R_1 = \prod_{j:\tau^*_j = 0}D_j$, and $L_1\cdot\Gamma R_1$ would have nonzero entry $c$ at index $(1^{K-i},1^{K-i})$ of $G'_i$ and still be the standard signature of a matchgate reduced at $1,...,i$.

	$C_j$ (resp. $D_j)$ is the permutation matrix whose only nonzero entry in row $\sigma\in\{0,1\}^K$ (resp. $\sigma\in\{0,1\}^{K(n-1)}$) is 1 in column $\sigma\oplus e_j$ (resp. column $\sigma\oplus e_{q_j}$) if $G'_i$ contains entries from $\Gamma^{\sigma}$ (resp. $\Gamma_{\sigma}$), and 1 in column $\sigma$ otherwise. $C_j$ and $D_j$ are certainly nonsingular.

	To construct the $K$-input, $K$-output transducer realizing $C_j$ as a standard signature, begin with a $(K,K)$-bipartite graph where for every $\nu\neq j$, left node $\nu$ and right node $\nu$ are connected by an edge of weight 1. Add an extra vertex between left node $j$ and right node $j$, and draw a path of length two connecting these three vertices, where both edges of the path have weight 1. This construction is shown in Figure~\ref{fig:1}.

	The $K(n-1)$-input, $K(n-1)$-output transducer realizing $D_j$ as a standard signature is similarly constructed, the only difference being that the bipartite graph has left and right vertex sets of size $K(n-1)$, and the path of length two is drawn between left node $q_j$ and right node $q_j$.

	Next, we want to scale all of the entries of $L_1\Gamma R_1$ by a factor of $1/c$, so define $L_2(c)$ to be the diagonal matrix whose entry at index $1^{K-i}$ is $1/c$ and whose entries at all other indices are 1. Obviously $L_2(c)$ is nonsingular and satisfies the matchgate identities \eqref{eq:mgis}. We take $\Gamma^{(1)} = L_2(c)L_1\Gamma^{(0)} R_1$.


	\begin{step}
		($\Gamma^{(1)}\Rightarrow \Gamma^{(2)}$): Turn edge entries of $G'_i$ in row or column $1^{K-i}$ to 0.
	\end{step}

	We demonstrate how to do this for edge entries in column $1^{K-i}$. Firstly, edge entries $(1^{K-i}\oplus e_j,1^{K-i})$ in $G'_i$ are already zero by the parity condition.\footnote{As characters of general matchgates with omittable nodes do not satisfy the parity condition necessarily, the proof of the group property in \cite{groupprop} requires an extra construction to turn edge entries $(1^{K-i}\oplus e_j,1^{K-i})$ to zero. This is an instance where our proof of the group property for signatures is easier than that for characters.} To set each of the remaining edge entries in this column to zero, we will proceed in reverse lexicographic order over the rows $1^{K-i}\oplus e_j\oplus e_k$ of $G'_i$ and at each step left-multiply $\Gamma$ by a matrix $L^{j,k}_3$ which corresponds in $\Gamma$ to subtracting $b:= (G'_i)^{1^{K-i}\oplus e_j\oplus e_k}_{1^{K-i}}$ times row $0^i\circ 1^{K-i}$ of $G'$ from row $0^i\circ(1^{K-i}\oplus e_j\oplus e_k)$ of $G'$.

	$L^{j,k}_3$ must be a matrix whose nonzero entries include diagonal entries equal to 1 and entry $(0^i\circ(1^{K-i}\oplus e_j\oplus e_k), 0^i\circ 1^{K-i})$ equal to $-b$. A standard signature satisfying these conditions can be realized by the following matchgate: construct a $(K,K)$-bipartite graph in which left node $\nu$ and right node $\nu$ are connected by an edge of weight 1 for all $\nu$, and left nodes $j$ and $k$ are also connected by an edge of weight $-b$. The standard signature of this matchgate is nonsingular. The construction is shown in Figure~\ref{fig:2}.

	$L^{j,k}_3$ additionally has nonzero entries $(\sigma\oplus e_j\oplus e_k, \sigma)$ equal to $-b$ for all $\sigma\in\{0,1\}^{K}$, i.e. left-multiplication by $L^{j,k}_3$ corresponds to subtracting $b$ times row $\sigma$ from row $\sigma\oplus e_j\oplus e_k$. These extraneous side effects do not however affect any of the progress we've made as the rows $1^{K-i}\oplus e_j\oplus e_k$ of $G'_i$ are taken in reverse lexicographic order.

	The only issue is that the matchgate we have constructed is not necessarily planar. But by Lemma~\ref{lem:crossover1} in Appendix~\ref{app:crossover}, there exists a planar matchgate with standard signature equal to $L^{j,k}_3$, except at nonzero off-diagonal entries other than $(0^i\circ (1^{K-i}\oplus e_j\oplus e_k), 0^i\circ 1^{K-i})$, where it may differ from $L^{j,k}_3$ by a factor of $-1$, but by the reasoning in the previous paragraph, this does not matter. Denote the standard signature of this planar matchgate by $L'^{j,k}_3$.

	To achieve step 2 for edge entries in \emph{rows} $1^{K-i}$ as well, we can define matrices $R'^{j,k}_3$ analogously. We can thus take $\Gamma^{(2)} = \left(\prod_{i+2\le j,k\le K} L'^{j,k}_3\right)\cdot\Gamma^{(1)}\cdot\left(\prod_{i+2\le j,k\le K}R'^{j,k}_3\right)$, where the indexing in the products respects the abovementioned reverse lexicographic order.

	\begin{step}
		($\Gamma^{(2)}\Rightarrow \Gamma^{(3)}$): Turn entry $(1^{K-i}\oplus e_{i+1},1^{K-i}\oplus e_{q_{i+1}})$ in $G'_i$ to 1.
	\end{step}

	Note that $c' := (G'_i)^{1^{K-i}\oplus e_{j}}_{1^{K-i}\oplus e_{q_k}}$ must be nonzero for some $j,k\in[K]$ or else $(G'_i)$ is singular. As in Step 1, we will first left-multiply $\Gamma$ by some $L_4$ to move this nonzero entry to row $1^{K-i}\oplus e_{{i+1}}$ and then right-multiply by some $R_4$ to move it to column $1^{K-i}\oplus e_{q_{i+1}}$. Unfortunately, multiplying by $C_j$ or $D_j$ defined in Step 1 would interfere with the progress made so far in Steps 1 and 2.

	Instead, $L_4$ must be a matrix whose only nonzero entry in row (resp. column) $0^i\circ(1^{K-i}\oplus e_j)$ is 1 in column (resp. row) $0^i\circ(1^{K-i}\oplus e_{i+1})$, and whose only nonzero entry in row (resp. column) $0^i\circ(1^{K-i}\oplus e_{i+1})$ is 1 in column (resp. row) $0^i\circ(1^{K-i}\oplus e_j)$. A signature satisfying these conditions can be realized by the following matchgate: construct a $(K,K)$-bipartite graph in which left node $\nu$ and right node $\nu$ are connected by an edge of weight 1 for all $\nu\neq j,i+1$. Connect left node $j$ to right node $i+1$ and left node $i+1$ to right node $j$ with edges of weight 1. The standard signature $L_4$ of this is nonsingular. The construction is shown in Figure~\ref{fig:3}.

	$L_4$ also has nonzero entries $(i_1\cdots i_j\cdots i_{i+1}\cdots i_K,i_1\cdots i_{i+1}\cdots i_j\cdots i_K)$, so left-multiplication by $L_4$ corresponds to switching row $0^i\circ(i_1\cdots i_{i+1}i_{i+2}\cdots i_j\cdots i_{K})$ with row $0^i\circ(i_1\cdots i_ji_{i+2}\cdots i_{i+1}\cdots i_{K})$ for all $i_{i+1},...,i_{K}\in\{0,1\}$. Multiplication by $L_4$ affects neither the progress we've made on entry $(1^{K-i},1^{K-i})$ of $G'_i$ because all bits in the row and column indices are equal, nor the progress on the edge entries in row $1^{K-i}$ and column $1^{K-i}$ of $G'_i$ because these get swapped with each other and were already all zero, keeping them equal to zero.

	As before, the only issue is that the matchgate constructed is not planar. But by Lemma~\ref{lem:crossover2} in Appendix~\ref{app:crossover}, there exists a planar matchgate whose standard signature agrees with $L_4$ at every entry up to sign. By the above, that the nonzero entries other than $(0^i\circ 1^{K-i}\oplus e_j,0^i\circ 1^{K-i}\oplus e_{i+1})$ and $(0^i\circ 1^{K-i}\oplus e_{i+1},0^i\circ 1^{K-i}\oplus e_j)$ may be $-1$ does not matter. Furthermore, if either entry $(0^i\circ 1^{K-i}\oplus e_j,0^i\circ 1^{K-i}\oplus e_{i+1})$ or $(0^i\circ 1^{K-i}\oplus e_{i+1},0^i\circ 1^{K-i}\oplus e_j)$ were $-1$, at worst we may eventually need to replace $c'$ with $-c'$, but the argument still holds. Denote the standard signature of this planar matchgate by $L'_4$. We can analogously define $R'_4$.

	Next, we scale entry $(1^{K-i}\oplus e_{i+1},1^{K-i}\oplus e_{q_{i+1}})$ of $G'_i$ by a factor of $1/c'$, so we take $\Gamma^{(3)} = L_2(c')L'_4\Gamma^{(2)} R'_4$.

	\begin{step}
		($\Gamma^{(3)}\Rightarrow \Gamma^{(4)}$): Turn all other edge entries in $G'_i$ in row $1^{K-i}\oplus e_{i+1}$ or column $1^{K-i}\oplus e_{q_{i+1}}$ to zero.
	\end{step}

	We demonstrate how to do this for edge entries in column $1^{K-i}\oplus e_{q_{i+1}}$ of $G'_i$. To set each of the edge entries in this column to zero, we will proceed in reverse lexicographic order over the row indices $1^{K-i}\oplus e_j$ of $G'_i$ and at each step left-multiply $\Gamma$ by a matrix $L^j_5$ which subtracts $b:= (G'_i)^{1^{K-i}\oplus e_j}_{1^{K-i}\oplus e_{q_{i+1}}}$ times row $0^i\circ 1^{K-i}$ of $G'$ from row $0^i\circ (1^{K-i}\oplus e_j)$ of $G'$.

	$L^{j}_5$ is a matrix whose nonzero entries include diagonal entries equal to 1 and entry $(0^i\circ (1^{K-i}\oplus e_j), 0^i\circ (1^{K-i}\oplus e_{i+1}))$ equal to $-b$. A signature satisfying these conditions can be realized by the following matchgate: construct a $(K,K)$-bipartite graph in which left node $\nu$ and right node $\nu$ are connected by an edge of weight 1 for all $\nu$, and connect left node $j$ to right node $i+1$ by an edge of weight $-b$. The standard signature of this matchgate is nonsingular. The construction is shown in Figure~\ref{fig:4}.

	$L^{j}_5$ additionally has nonzero entries $(0^i\circ(\sigma\oplus e_j\oplus e_{i+1}), 0^i\circ \sigma)$ equal to $-b$ for all $\sigma\in\{0,1\}^{K-i}$ such that $\sigma_{j} = 1$ and $\sigma_{i+1} = 0$, i.e. for all such $\sigma$, left-multiplication by $L^{j}_5$ corresponds to subtracting $b$ times row $0^i\circ \sigma$ of $G'$ from row $0^i\circ \sigma\oplus e_j\oplus e_{i+1}$ of $G'$. As before, these extraneous side effects do not affect the progress we've made in this step because the edge entries in column $1^{K-i}\oplus e_{q_{i+1}}$ of $G'_i$ are being taken in reverse lexicographic order. They also certainly do not affect row $1^{K-i}$, nor do they affect column $1^{K-i}$ as $1^{K-i}$ does not satisfy the above criteria for $\sigma$.

	Again, the issue is that the matchgate constructed is not planar, but by Lemma~\ref{lem:crossover3} in Appendix~\ref{app:crossover}, there exists a planar matchgate with standard signature equal to $L^j_5$ except possibly at the nonzero off-diagonal entries other than $(0^i\circ (1^{K-i}\oplus e_j), 0^i\circ (1^{K-i}\oplus e_{i+1}))$, where it may differ by a factor of $-1$. Denote the standard signature of this planar matchgate by $L'^j_5$.

	To achieve step 4 for edge entries in rows $1^{K-i}\oplus e_{i+1}$ as well, we can define matrices $R'^j_5$ analogously. We can thus take $\Gamma^{(4)} = \left(\prod_{i+2\le j\le K} L'^{j}_5\right)\cdot\Gamma^{(3)}\cdot\left(\prod_{i+2\le j\le K}R'^{j}_5\right)$, where the products respect the abovementioned reverse lexicographic order.
\end{proof}

\section{Reducing to Domain Size $2^K$}
\label{sec:fuyang}

In this section we use Theorem \ref{thm:rigid} to reduce proving a basis collapse theorem over all domain sizes to proving one over domain sizes $2^K$. The result we will prove is the following generalization of the main result in \cite{fuyang} whose strategy we follow.

\begin{thm}
	Suppose Theorem~\ref{thm:main} has been proven for domain size $r$. If recognizer signatures $R_1,...,R_a$ and generator signatures $G_1,...,G_b$ on domain size $k > r$ belonging to matchgrid $\Omega$ are simultaneously realizable on a $2^{\ell}\times k$ basis $M$ of rank $r$ and $R_1$ is of full rank, then there exists a basis $M'$ of size at most $\lfloor\log_2 r\rfloor$ on which they are simultaneously realizable.\label{thm:fuyanggen}
\end{thm}

We'll need some preliminaries before we can prove this. Express $M$ as $\begin{pmatrix} \alpha_1 & \alpha_2 & \cdots & \alpha_k\end{pmatrix}$ where each $\alpha_i$ is a $2^{\ell}$-dimensional column vector. Let $i_1,...,i_r\in[k]$ be column indices of $M$ for which $M^{i_1\cdots i_r} := \begin{pmatrix} \alpha_{i_1} & \alpha_{i_2} & \cdots & \alpha_{i_r}\end{pmatrix}$ is of full rank. Define \emph{sub-signature} $R^{i_1\cdots i_r}$ to consist of entries $(R_{j_1\cdots j_n})$ of $R$ ranging over all $j_1,...,j_n\in\{i_1,...,i_r\}\subset[k]$. We can define the sub-signature $G^{i_1\cdots i_r}$ of a generator analogously. Equivalently, \begin{equation}R^{i_1\cdots i_r} = \underline{R}(M^{i_1\cdots i_r})^{\tensor n}\label{eq:smaller}\end{equation}

\begin{lem}
	For a recognizer $R$ realizable on basis $M$, if there exists $t$ for which $\rank(R(t))\ge r$, then $\rank(R^{i_1\cdots i_r}(t)) = r$.\label{lem:subsig}
\end{lem}

\begin{proof}
	By Lemma~\ref{lem:relunderline2}, $R(t) = (M^T)^{\tensor(n-1)}\underline{R}(t)M$, so $\rank(\underline{R}(t))\ge r$. By \eqref{eq:smaller} and Lemma~\ref{lem:relunderline}, $R^{i_1\cdots i_r}(t) = ((M^{i_1\cdots i_r})^T)^{\tensor(n-1)}\underline{R}(t)M^{i_1\cdots i_r}$, so $\rank(R^{i_1\cdots i_r}(t)) = r$.
\end{proof}

For such a recognizer $R$, define for each $w\in[k]$ a $nk^{n-1}$-dimensional column vector $b_w$ by \begin{equation}
	b_w = \begin{pmatrix}
		R_{w 1\cdots 11} & \cdots & R_{wk\cdots kk} & R_{1 w\cdots 11} & \cdots & R_{k w\cdots kk} & \cdots & R_{11\cdots 1w} & \cdots & R_{kk\cdots kw}
	\end{pmatrix}^T\label{eq:bw}
\end{equation}
and define $A_{i_1\cdots i_r}$ to be the $nk^{(n-1)}\times r$ matrix whose $j$th column is $b_{i_j}$.

\begin{obs}
	$\rank(A_{i_1\cdots i_r}) = r$.\label{obs:arank}
\end{obs}

\begin{proof}
	$R^{i_1\cdots i_r}(t)$ is a submatrix of $A_{i_1\cdots i_r}$ and already has rank $r$ by Lemma~\ref{lem:subsig}.
\end{proof}

\begin{obs}
	As $M$ has rank $r$, every column $\alpha_w$ can be expressed as a linear combination $\sum^r_{j=1}x^{i_j}_w\alpha_{i_j}$.\label{obs:linearcombo}
\end{obs}

Denote the $r\times k$ matrix $\left(x_w^{i_j}\right)$ of these coefficients by $X_{i_1\cdots i_r}$.

\begin{lem}
	For each $w\in[k]$, $A_{i_1\cdots i_r}X = b_w$ has the unique solution $X = \begin{pmatrix} x^{i_1}_w & \cdots & x^{i_r}_w \end{pmatrix}^T$.\label{lem:uniquesol}
\end{lem}

\begin{proof}
	A solution for $X$ exists and is unique because $\rank(A_{i_1\cdots i_r})= r$ by Observation~\ref{obs:arank}. To check that the purported solution for $X$ is correct, pick any entry $R_{j_1\cdots j_{t-1}wj_t\cdots j_n}$ of $b_w$. By definition of recognizer signatures, \begin{align*}
		R_{j_1\cdots j_{t-1}wj_t\cdots j_n} &= \langle\underline{R},\alpha_{j_1}\tensor\cdots\tensor\alpha_{j_{t-1}}\tensor\alpha_w\tensor\alpha_{j_{t+1}}\tensor\cdots\tensor\alpha_{j_n}\rangle \\
		&= \langle\underline{R},\alpha_{j_1}\tensor\cdots\tensor\alpha_{j_{t-1}}\tensor\left(\sum^r_{j=1}x^{i_j}_w\alpha_{i_j}\right)\tensor\alpha_{j_{t+1}}\tensor\cdots\tensor\alpha_{j_n}\rangle \\
		&= \sum^{r}_{j=1}x^{i_j}_w\cdot\left\langle\underline{R},\alpha_{j_1}\tensor\cdots\tensor\alpha_{j_{t-1}}\tensor\alpha_{i_j}\tensor\alpha_{j_{t+1}}\tensor\cdots\tensor\alpha_{j_n}\right\rangle \\
		&= \sum^{r}_{j=1}x^{i_j}_wR_{j_1\cdots j_{t-1}i_jj_{t+1}\cdots j_n}.
	\end{align*} Here $\langle\cdot,\cdot\rangle$ denotes the inner product.
\end{proof}

The content of Lemma~\ref{lem:uniquesol} is that to any such $R$ we can get a matrix $X_{i_1\cdots i_r}$ without needing to know the actual basis $M$ realizing $R$.

\begin{lem}
	If $\rank(R_1(t))\ge r$ for some $t$, then recognizers $R_1,...,R_a$ are simultaneously realizable on some basis of rank $r$ iff the following conditions hold:

	\begin{enumerate}
		\item $\rank(R^{i_1\cdots i_r}_1(t)) = r$ for some $i_1,...,i_r\in[k]$.
		\item There exists a unique $r\times k$ matrix $X_{i_1\cdots i_r} = (x^{i_j}_w)$ such that $A_{i_1\cdots i_r}X = b_w$ has the solution $X = \begin{pmatrix} x^{i_1}_w & \cdots & x^{i_r}_w \end{pmatrix}^T$ for each $w\in[k]$.
		\item There exists a $2^{\ell}\times r$ basis $M_{(r)}$ such that the $R^{i_1\cdots i_r}_j$ are simultaneously realizable on $M_{(r)}$ for all $j\in[a]$.
		\item $R_j = R^{i_1\cdots i_r}_jX^{\tensor n}_{i_1\cdots i_r}$ for all $j\in[a]$.
	\end{enumerate}\label{lem:conds}
\end{lem}

\begin{proof}
	Suppose $R_1,...,R_a$ are simultaneously realizable are some basis $M$. Conditions 1 and 2 follow from Lemma~\ref{lem:subsig} and Lemma~\ref{lem:uniquesol} respectively. Take $M_{(r)}$ to be $M^{i_1\cdots i_r}$, and condition 3 follows from the definition of sub-signature. By Observation~\ref{obs:linearcombo}, $X_{i_1\cdots i_r}$ satisfies $M_{(r)}X_{i_1\cdots i_r} = M$, so $R_j = \underline{R}_jM^{\tensor n} = \underline{R}_jM_{(r)}^{\tensor n}X^{\tensor n}_{i_1\cdots i_r} = R^{i_1\cdots i_r}_jX^{\tensor n}_{i_1\cdots i_r}$, and condition 4 follows.

	Conversely, suppose conditions 1-4 hold. Condition 3 tells us that there is some $M_{(r)}$ for which $R^{i_1\cdots i_r}_j = \underline{R}_j\left(M_{(r)}\right)^{\tensor n}$ for all $j\in[a]$. Then $$R_j = R^{i_1\cdots i_r}_jX^{\tensor n}_{i_1\cdots i_r} = \underline{R}_j(M_{(r)})^{\tensor n}X^{\tensor n}_{i_1\cdots i_r} = \underline{R}_j(M_{(r)} X_{i_1\cdots i_r})^{\tensor n},$$ so $R_1,...,R_a$ are simultaneously realizable on $M := M_{(r)}X_{i_1,...,i_r}$.
\end{proof}

\begin{thm}
	If recognizer signatures $R_1,...,R_a$ and generator signatures $G_1,...,G_b$ in matchgrid $\Omega$ are simultaneously realizable on a basis of rank $r$ and there exists $t$ for which $\rank(R_1(t))\ge r$, then there exist recognizer signatures $\check{R}_1,...,\check{R}_a$ and generator signatures $\check{G}_1,...,\check{G}_b$ in matchgrid $\Omega'$ over domain size $r$ that are simultaneously realizable on a $2^{\ell}\times r$ basis $M_{(r)}$. Furthermore, \begin{equation}
		\Holant(\Omega) = \Holant(\Omega')\label{eq:holants}.
	\end{equation}\label{thm:getcheck}
\end{thm}

\begin{proof}
	We first construct $\check{R}_1,...,\check{R}_a,\check{G}_1,...,\check{G}_b$. $X_{i_1,...,i_r}$ obtained from $R_1$ via Lemma~\ref{lem:uniquesol} has rank $r$, so let $X'_{i_1\cdots i_r}$ be the $k\times k$ invertible matrix for which $X_{i_1\cdots i_r}X'_{i_1\cdots i_r} = \begin{pmatrix}
		I_r \ | \ \textbf{0}_{r\times (k-r)}
	\end{pmatrix}$, where $I_r$ is the $r\times r$ identity matrix and $\textbf{0}_{r\times (k-r)}$ denotes the $r\times (k-r)$ matrix consisting solely of zeroes. For each $j\in[a]$, let $R'_j = R_j(X'_{i_1,...,i_r})$, and let $\check{R}_j$ be the sub-signature $(R'_j)^{1\cdots r}$. Likewise, for each $j\in[b]$, let $G'_j$ be defined by $G_j = (X'_{i_1\cdots i_r})^{\tensor n}G'_j$, and let $\check{G}_j$ be the sub-signature $(G'_j)^{1\cdots r}$.

	\begin{claim}
		For all $j$, $\check{R}_j = R^{i_1\cdots i_r}_j$ and $\check{G}_j = G^{i_1\cdots i_r}_j$.\label{claim:check}
	\end{claim}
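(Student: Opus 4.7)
The key identity is $X_{i_1\cdots i_r} = (I_r \mid \mathbf{0}_{r \times (k-r)})(X'_{i_1\cdots i_r})^{-1}$, which I obtain by right-multiplying the defining relation $X_{i_1\cdots i_r} X'_{i_1\cdots i_r} = (I_r \mid \mathbf{0}_{r\times(k-r)})$ by $(X'_{i_1\cdots i_r})^{-1}$; equivalently, the first $r$ rows of $(X'_{i_1\cdots i_r})^{-1}$ are precisely those of $X_{i_1\cdots i_r}$, a relationship which is canonical, i.e., independent of the non-unique choice of $X'_{i_1\cdots i_r}$. To show $\check{R}_j = R^{i_1\cdots i_r}_j$, I will apply condition 4 of Lemma~\ref{lem:conds} to rewrite $R_j = R^{i_1\cdots i_r}_j X^{\tensor n}_{i_1\cdots i_r}$, so that
\[R'_j = R_j (X'_{i_1\cdots i_r})^{\tensor n} = R^{i_1\cdots i_r}_j (X_{i_1\cdots i_r} X'_{i_1\cdots i_r})^{\tensor n} = R^{i_1\cdots i_r}_j \bigl(I_r \mid \mathbf{0}_{r\times(k-r)}\bigr)^{\tensor n}.\]
Reading off entries, $R'_j$ will vanish at any index $(t_1, \ldots, t_n) \in [k]^n$ having some $t_s > r$ and will equal the corresponding entry of $R^{i_1\cdots i_r}_j$ at any index in $[r]^n$ (under the natural identification $s \leftrightarrow i_s$); restricting to the sub-signature on $[r]^n$ then yields $\check{R}_j = R^{i_1\cdots i_r}_j$.

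For $\check{G}_j = G^{i_1\cdots i_r}_j$, I will compute analogously: for $j_1, \ldots, j_n \in [r]$,
\[(G'_j)^{j_1\cdots j_n} = \sum_{t \in [k]^n} \prod_{s=1}^n \bigl((X'_{i_1\cdots i_r})^{-1}\bigr)^{j_s}_{t_s}\, G_j^{t_1\cdots t_n} = \sum_{t \in [k]^n} \prod_s x^{i_{j_s}}_{t_s}\, G_j^{t_1\cdots t_n},\]
using the key identity above to replace the first $r$ rows of $(X'_{i_1\cdots i_r})^{-1}$ by $X_{i_1\cdots i_r}$. I then identify this expression with the sub-signature entry $G_j^{i_{j_1}\cdots i_{j_n}}$ via the generator-side analog of condition 4, which follows from the factorization $M = M^{i_1\cdots i_r} X_{i_1\cdots i_r}$ combined with the injectivity of $(M^{i_1\cdots i_r})^{\tensor n}$ on $(\C^r)^{\tensor n}$.

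The hardest part will be pinning down this generator-side analog: unlike a recognizer signature (which is uniquely determined by $\underline{R}_j$ and $M$ via $R_j = \underline{R}_j M^{\tensor n}$), a generator signature on a rank-$r$ basis of ambient dimension $k > r$ is only determined modulo the $(k^n - r^n)$-dimensional kernel of $M^{\tensor n}$, so one cannot literally mirror the clean recognizer argument. I expect to resolve this by observing that any canonical representative $G_j$ must extend its sub-signature $G^{i_1\cdots i_r}_j$ compatibly with the constraint $\underline{G}_j = M^{\tensor n} G_j$, and that this compatibility forces the direct computation above to collapse to the corresponding sub-signature entry.
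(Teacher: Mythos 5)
Your recognizer computation is fine and lands on the same equation $R'_j = R^{i_1\cdots i_r}_j(I_r\mid\mathbf{0})^{\otimes n}$ that the paper derives (the paper expands $\underline{R}_jM^{\otimes n}(X')^{\otimes n}$ using the factorization $M = M^{i_1\cdots i_r}X_{i_1\cdots i_r}$ rather than invoking condition~4 of Lemma~\ref{lem:conds} directly, but the two routes are interchangeable). On the generator side, however, you raise a genuine worry in your last paragraph and then do not close it: your middle paragraph establishes $(G'_j)^{j_1\cdots j_n} = \bigl(X^{\otimes n}_{i_1\cdots i_r}G_j\bigr)^{j_1\cdots j_n}$, and the step to ``the sub-signature entry $G_j^{i_{j_1}\cdots i_{j_n}}$'' simply does not hold for an arbitrary representative $G_j$: if $G_j$ has support at some $(t_1,\ldots,t_n)\notin\{i_1,\ldots,i_r\}^n$, those terms survive in $\sum_t\prod_s x^{i_{j_s}}_{t_s}G_j^{t_1\cdots t_n}$ and the equality with the literal sub-vector fails. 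The way out is not to chase a ``canonical representative'' of $G_j$ but to take the generator-side analogue of equation~\eqref{eq:smaller} as the \emph{definition} of $G^{i_1\cdots i_r}_j$: it is the unique element of $(\C^r)^{\otimes n}$ satisfying $(M^{i_1\cdots i_r})^{\otimes n}G^{i_1\cdots i_r}_j = \underline{G}_j$, which exists and is unique because $M^{i_1\cdots i_r}$ has full column rank. With that characterization, your factorization $\underline{G}_j = (M^{i_1\cdots i_r})^{\otimes n}X^{\otimes n}_{i_1\cdots i_r}G_j$ immediately gives $G^{i_1\cdots i_r}_j = X^{\otimes n}_{i_1\cdots i_r}G_j$ by injectivity, and your middle-paragraph computation then finishes the claim with no further work. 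This is exactly what the paper's proof does: it shows $(M^{i_1\cdots i_r})^{\otimes n}\check{G}_j = \underline{G}_j$ by expanding $M^{\otimes n}(X')^{\otimes n}G'_j = (M^{i_1\cdots i_r}\mid\mathbf{0})^{\otimes n}G'_j$, and concludes by this same injectivity. So your proposal has all the right pieces; the last paragraph should be replaced by the one-line observation above rather than a search for a preferred $G_j$.
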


	\begin{proof}
		We need to check that \begin{equation}
			\check{R}_j = \underline{R}_j(M^{i_1\cdots i_r})^{\tensor n}\label{eq:claim1}
		\end{equation} \begin{equation}(M^{i_1\cdots i_r})^{\tensor n}\check{G}_j = \underline{G}_j.\label{eq:claim2}\end{equation} Indeed, \begin{align*}
			R'_j &= \underline{R}_j M^{\tensor n}(X'_{i_1\cdots i_r})^{\tensor n} \\
			&= \underline{R}_j(M^{i_1\cdots i_r})^{\tensor n}X^{\tensor n}_{i_1\cdots i_r}(X'_{i_1\cdots i_r})^{\tensor n} \\
			&= \underline{R}_j \begin{pmatrix}
				M^{i_1\cdots i_r} \ | \ \textbf{0}_{r\times(k-r)}
			\end{pmatrix}^{\tensor n},
		\end{align*} proving \eqref{eq:claim1}. Similarly, \begin{align*}
			\underline{G}_j &= M^{\tensor n}G_j \\
			&= M^{\tensor n}(X'_{i_1\cdots i_r})^{\tensor n}G'_j \\
			&= \begin{pmatrix}M^{i_1\cdots i_r} \ | \ \textbf{0}_{r\times(k-r)}\end{pmatrix}^{\tensor n}G'_j,
		\end{align*} proving \eqref{eq:claim2}.
	\end{proof}

	We conclude that $\check{R}_1,...,\check{R}_a,\check{G}_1,...,\check{G}_b$ are simultaneously realizable on the basis $M_{(r)} := M^{i_1\cdots i_r}$.

	To check that the Holants agree, first note that if $R'_1,...,R'_a,G'_1,...,G'_b$ lie in a corresponding matchgrid $\Omega''$, $\Holant(\Omega) = \Holant(\Omega'')$ because we're just applying a basis change from $M$ to $MX'_{i_1\cdots i_r}$. And $\Holant(\Omega') = \Holant(\Omega'')$ because the operation of taking sub-signatures does not lose any information in this case, i.e. $(R'_j)_{\sigma} = 0$ for all $\sigma\in[k]^n\backslash[r]^n$.
\end{proof}

For the next two results, suppose recognizer signatures $R_1,...,R_a$ and generator signatures $G_1,...,G_b$ in matchgrid $\Omega$ are simultaneously realizable on a basis of rank $r$ and there exists $t$ for which $\rank(R_1(t))\ge r$.

\begin{thm}
	 If the recognizer signatures $\check{R}_1,...,\check{R}_a$ and generator signatures $\check{G}_1,...,\check{G}_b$ constructed in Theorem~\ref{thm:getcheck} are also simultaneously realizable on a $2^{\ell'}\times r$ basis $M'_{(r)}$ of rank $r$, then recognizer signatures $R_1,...,R_a$ and generator signatures $G_1,...,G_r$ are simultaneously realizable on the $2^{\ell'}\times k$ basis $M'_{(r)}X_{i_1\cdots i_r}$, where $X_{i_1\cdots i_r}$ is obtained from $R_1$ by Lemma~\ref{lem:uniquesol}.\label{thm:reversecheck}
\end{thm}

\begin{proof}
	 $$R_j = R^{i_1\cdots i_r}_jX^{\tensor n}_{i_1\cdots i_r} = \check{R}_jX^{\tensor n}_{i_1\cdots i_r} = \underline{R}(M'_{(r)}X_{i_1\cdots i_r})^{\tensor n},$$ where the first equality holds by condition 4 of Lemma~\ref{lem:conds}, the second by Claim~\ref{claim:check}, the third by definition of $M'_{(r)}$. Likewise, because $$\check{G}_j = \begin{pmatrix}
	 	 	I_r \ | \ \textbf{0}_{r\times(k-r)}
	 	 \end{pmatrix}G'_j = X_{i_1\cdots i_r}X'_{i_1\cdots i_r}G'_j,$$
 	 we have that $$\underline{G}_j = M^{\tensor n}_{(r)}\check{G}_j = (M_{(r)}X_{i_1\cdots i_r})^{\tensor n}X'^{\tensor n}_{i_1\cdots i_r}G'_j = (M_{(r)}X_{i_1\cdots i_r})^{\tensor n}G_j,$$ so we conclude that $R_1,...,R_a,G_1,...,G_b$ are indeed simultaneously realizable on $M_{(r)}X_{i_1\cdots i_r}$.
\end{proof}

We are now ready to prove Theorem~\ref{thm:fuyanggen}.

\begin{proof}[Proof of Theorem~\ref{thm:fuyanggen}]
	By Theorem~\ref{thm:getcheck}, signatures $\check{R}_1,...,\check{R}_a$ and $\check{G}_1,...,\check{G}_b$ on domain size $r$ are simultaneously realizable on a $2^{\ell}\times r$ basis.

	By definition, $\check{R}_1 = R^{i_1\cdots i_r}_1$, and because $R_1$ was assumed to be full-rank, Lemma~\ref{lem:subsig} tells us that $\check{R}_1$ is full-rank. Then by the hypothesis that Theorem~\ref{thm:main} has already been proven for domain size $r$, there exists a $2^{\lfloor\log_2 r\rfloor}\times r$ basis $M'_{(r)}$ on which $\check{R}_1,...,\check{R}_a$ and $\check{G}_1,...,\check{G}_b$ are simultaneously realizable. By Theorem~\ref{thm:reversecheck}, $R_1,...,R_a$ and $G_1,...,G_b$ are simultaneously realizable on $2^{\lfloor\log_2 r\rfloor}\times k$ basis $M' := M'_{(r)}X_{i_1\cdots i_r}$.
\end{proof}

By Corollary~\ref{cor:red} and Theorem~\ref{thm:fuyanggen}, it remains to prove collapse theorems for holographic algorithms on domain sizes $k = 2^K$ and over bases of full rank, after which we get the following corollary.

\begin{cor}
	Any holographic algorithm on a basis of size $\ell$ and domain size $k$ not a power of 2 which uses at least one generator signature of full rank can be simulated on a basis of size at most $2^{\lfloor\log_2 k\rfloor}$.
\end{cor}

\section{Collapse Theorem For Domain Size $2^K$}
\label{sec:collapse-final}

The following is a direct generalization of the argument from Section 5.3 of \cite{caifu}, but we include it for completeness. We will take $G$ to be a generator signature of full rank on domain size $k=2^K$, basis $M$ to be a $2^{\ell}\times 2^K$ matrix of rank $2^K$, and $\underline{G} = M^{\tensor n}G$ to be the corresponding standard signature of arity $n\ell$. By Theorem~\ref{thm:corsub} applied to the transpose of $\underline{G}(t)$, there exists a cluster $Z = s+\{e_{p_1},...,e_{p_K}\}$ of rows of full rank in $\underline{G}(t)$. Denote by $M^Z$ the submatrix of $M$ consisting of rows with indices in $Z$.

\begin{lem}
	$M^Z$ is invertible.
\end{lem}

\begin{proof}
	The $(k,n\ell)$ cluster submatrix of $\underline{G}(t)$ of full rank whose existence is guaranteed by Theorem~\ref{thm:corsub} is a submatrix of $M^ZG(t)(M^T)^{\tensor(n-1)}$, so $M^Z$ has rank at least $2^K$. But $M^Z$ is a $2^K\times 2^K$ matrix, so $M^Z$ is invertible.
\end{proof}

Following the notation of \cite{caifu}, now denote the column vector $(M^Z)^{\tensor n}G$ of dimension $2^{Kn}$ by $\underline{G}^{*\leftarrow Z}$ and the column vector $(M^Z)^{\tensor(t-1)}\tensor M\tensor (M^Z)^{\tensor(n-t)}\cdot G$ of dimension $2^{Kn+\ell-K}$ by $\underline{G}^{t^c\leftarrow Z}$. Because $M^Z$ and $G(t)$ both have rank $2^K$, $\underline{G}^{*\leftarrow Z}$ and $\underline{G}^{t^c\leftarrow Z}$ also have rank $2^K$. We check that these can be realized as standard signatures.

\begin{lem}
	$\underline{G}^{*\leftarrow Z}$ is the standard signature of a generator matchgate of arity $Kn$.\label{lem:*}
\end{lem}

\begin{proof}
	Take the matchgate $G$, and in each block, attach an edge of weight 1 to external node $i$ ($1\le i\le \ell$) if $s_i = 1$. In the matchgate $G'$ we get from these operations, designate external nodes $p_1,...,p_K$ in each block as the new external nodes of $G'$. The resulting matchgate realizes $\underline{G}^{*\leftarrow Z}$.
\end{proof}

\begin{lem}
	$\underline{G}^{t^c\leftarrow Z}$ is the standard signature of a generator matchgate of arity $Kn-K+\ell$.\label{lem:tc}
\end{lem}

\begin{proof}
	The proof of Lemma~\ref{lem:tc} is almost identical to that of Lemma~\ref{lem:*}, except block $t$ is treated differently. Take the matchgate $G$, and in each block except the $t$-th one, attach an edge of weight 1 to external node $i$ ($1\le i\le \ell$) if $s_i = 1$. In the matchgate $G'$ we get from these operations, take the external nodes to be all $\ell$ external nodes in block $t$, as well as nodes $p_1,...,p_K$ in every other block. The resulting matchgate realizes $\underline{G}^{t^c\leftarrow Z}$.
\end{proof}

Now define $T = M(M^Z)^{-1}$. Here is the key step of the collapse theorem, making use of Theorem~\ref{thm:main_step}.

\begin{lem}
	$T$ is the standard signature of a $K$-input, $\ell$-output transducer.\label{lem:transducer}
\end{lem}

\begin{proof}
	We first express $T$ in terms of $\underline{G}^{*\leftarrow Z}$ and $\underline{G}^{t^c\leftarrow Z}$. If the entries of $\underline{G}^{t^c\leftarrow Z}$ are indexed by $(i_{1,1}\cdots i_{1,K})\cdots(i_{t-1,1}\cdots i_{t-1,K})(i'_1\cdots i'_{\ell})(i_{t+1,1}\cdots i_{t+1,\ell})\cdots(i_{n,1}\cdots i_{n,K})$, denote by $\underline{G}^{t^c\leftarrow Z}(t)$ the matrix form of $\underline{G}^{t^c\leftarrow Z}$ in which the rows are indexed by $i'_1\cdots i'_{\ell}$ and the columns are indexed by $(i_{1,1}\cdots i_{1,K})\cdots(i_{t-1,1}\cdots i_{t-1,K})(i_{t+1,1}\cdots i_{t+1,\ell})\cdots(i_{n,1}\cdots i_{n,K})$.

	Observe that $$\underline{G} = M^{\tensor n}G = T^{\tensor n}(M^Z)^{\tensor n}G = T^{\tensor n}\underline{G}^{*\leftarrow Z}$$ so that \begin{equation}\underline{G}^{t^c\leftarrow Z} = (T^Z)^{\tensor(t-1)}\tensor T\tensor(T^Z)^{\tensor(n-t)}\underline{G}^{*\leftarrow Z}.\label{eq:bothsides}\end{equation} Putting both sides of \eqref{eq:bothsides} in matrix form, we conclude that \begin{equation}\underline{G}^{t^c\leftarrow Z}(t) = T\underline{G}^{*\leftarrow Z}(t).\label{eq:t-relation}\end{equation}

	Applying Theorem~\ref{thm:main_step} to the arity-$Kn$ standard signature $\underline{G}^{*\leftarrow Z}$, we have a recognizer whose standard signature $\underline{R}$ satisfies $\underline{G}^{*\leftarrow Z}(t)\underline{R}(t) = I_{2^K}$. Right-multiplying both sides of \eqref{eq:t-relation} by $\underline{R}(t)$, we find that $$\underline{G}^{t^c\leftarrow Z}(t)\underline{R}(t) = T.$$

	Say that the generator realizing $\underline{G}^{t^c\leftarrow Z}$ as a standard signature has external nodes $X_{i,1}$, $X_{i,2}$, ..., $X_{i,K}$ in block $i$ for each $i\neq t$, and external nodes $Y_{t,1}$ ,..., $Y_{t,\ell}$ in block $t$. Say that the generator realizing $\underline{R}$ as a standard signature has external nodes $Z_{i,1},...,Z_{i,K}$ in each block $i$.

	Construct the transducer $\Gamma$ realizing $T$ as a standard signature by connecting $X_{i,j}$ with $Z_{i,j}$ for all $i\neq t$, $j\in[K]$. Designate $Y_{t,1},...,Y_{t,\ell}$ to be the output nodes of $\Gamma$ and $Z_{t,1}$, ..., $Z_{t,K}$ to be the input nodes of $\Gamma$.
\end{proof}

From Theorem~\ref{lem:transducer} we obtain the collapse theorem for domain size $2^K$.

\begin{thm}
	Any holographic algorithm on a basis of size $\ell$ and domain size $2^K$ which uses at least one generator signature of full rank can be simulated on a basis of size $K$.
\end{thm}

\begin{proof}
	Suppose the holographic algorithm in question uses signatures $R_i,G_j$ ($1\le i\le r$, $1\le j\le g$) defined by $\underline{R}_iM^{\tensor m_i} = R_i$ and $\underline{G}_j = M^{\tensor n_j}G_j$ over basis $M$. Say that $G_1$ has full rank, and let $Z = s+\{e_{p_1},...,e_{p_K}\}$ denote the full-rank $(K,\ell)$-cluster of rows in $G_1$ which must exist by Theorem~\ref{thm:sub}. By Lemma~\ref{lem:transducer}, $T:=M(M^Z)^{-1}$ is the standard signature of some transducer matchgate $\Gamma$. Let $\underline{R}'_i = \underline{R}_iT^{\tensor m_i}$ and $\underline{G}'_j = \underline{G}^{*\leftarrow Z}_j$; by Lemma~\ref{lem:triv_transducer}, $\underline{R}'_i$ is the standard signature of some recognizer, and by Lemma~\ref{lem:tc}, $\underline{G}'_j$ is the standard signature of some generator. We conclude that the $R_i,G_j$ can be simultaneously realized on the basis $M^Z$ of size $K$.
\end{proof}

\section{Acknowledgments}
I would like to thank Professor Leslie Valiant for patiently advising me throughout the course of this project and providing much valuable feedback as well as suggestions on papers to read. I am also extremely indebted to Professor Jin-Yi Cai for his incredibly extensive and important comments on drafts of this paper.


\appendix

\section{Planarizing Matchgates}
\label{app:crossover}

In the proof of Lemma~\ref{lem:transform}, we made several initial constructions of transducers to achieve certain row and column operations but noted that those constructions, specifically those shown in Figures~\ref{fig:2}, \ref{fig:3}, \ref{fig:4}, needed to be modified because they were not planar. Following the technique of Cai and Gorenstein \cite{caigoren}, we planarize those matchgates by replacing every edge crossing with the so-called \emph{crossover gadget} $X$ shown in Figure~\ref{fig:crossover}.

\begin{figure}[h]
\centering
	\includegraphics[width=0.75in]{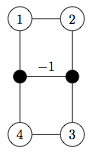}
	\caption{Crossover gadget. Unlabeled edges are of weight 1; labeled vertices are external nodes. Figure from \cite{caigoren}.}
	\label{fig:crossover}
\end{figure}

Because the standard signature $\underline{X}$ of the crossover gadget is given by $\underline{X}^{0000} = 1$, $\underline{X}^{0101} = 1$, $\underline{X}^{1010} = 1$, $\underline{X}^{1111} = -1$, and $\underline{X}^{\sigma} = 0$ for all other $\sigma\in\{0,1\}^4$ so that the standard signature remains invariant under any cyclic permutation of the external nodes, the orientation of the copy of $X$ placed over an edge crossing does not matter.

We first make precise our operation of planarizing matchgates, following the terminology of \cite{caigoren}. If an edge $\{u,v\}$ of weight $w$ crosses $t$ other edges, replace each of the $t$ crossings by a crossover gadget and replace the edge by $t+1$ edges connecting adjacent crossover gadgets. Of these $t+1$ edges, assign $t$ of them to have weight 1 and the remaining one to have weight $w$. Call the union of the $t+1$ edges the \emph{$u$-$v$-passage}.

Given a non-planar matchgate $\Gamma$, denote the matchgate obtained from planarizing $\Gamma$ by $\Gamma'$.

\begin{obs}
	Let $M$ be a perfect matching of $\Gamma'$ whose contribution $c$ to $\PerfMatch(\Gamma')$ is nonzero, and let $M'\subset M$ denote the edges not belonging to crossover gadgets. Then $M'$ is the union of $u$-$v$-passages corresponds to a perfect matching of $\Gamma$ whose contribution to $\PerfMatch(\Gamma)$ is $\pm c$.\label{obs:pm}
\end{obs}

\begin{proof}
	If an edge incident to one of the external nodes, say node 1, of a crossover gadget is present in $M$, then the edge incident to node 3 of the crossover gadget must be present in $M$ as well, as $\underline{X}^{\sigma} = 0$ if $\sigma_1\neq\sigma_3$. We conclude that $M$ is a union of $u$-$v$-passages. The corresponding perfect matching of $\Gamma$ has contribution $\pm c$ because each of the nonzero entries of $\underline{X}$ is $\pm 1$.
\end{proof}

We need to verify that the entries of $\underline{\Gamma}$ and $\underline{\Gamma'}$ are equal except for a select number of entries which differ by a factor of $-1$.

\begin{lem}
	Let $\Gamma$ be the $K$-input, $K$-output transducer shown in Figure~\ref{fig:2} with signature $L^{j,k}_3$. There exists a planar matchgate whose standard signature agrees with $\underline{\Gamma}$ on the main diagonal entries and entry $(1^{K}\oplus e_j\oplus e_k, 1^{K})$, and agrees with $\underline{\Gamma}$ everywhere else up to sign.
	\label{lem:crossover1}
\end{lem}

\begin{proof}
	Take the desired matchgate to be $\Gamma'$. Note that every subgraph of $\Gamma$ has at most one perfect matching. In other words, each entry of $\underline{\Gamma}$ arises from at most a single perfect matching. Therefore, by Observation~\ref{obs:pm}, $\underline{\Gamma}$ and $\underline{\Gamma'}$ agree everywhere up to sign. Now consider any main diagonal entry $\underline{\Gamma'}^{\sigma}_{\sigma} = \PerfMatch(\Gamma'\backslash Z)$. If $M$ is a perfect matching of $\Gamma'\backslash Z$ making a nonzero contribution to $\PerfMatch(\Gamma'\backslash Z)$, it corresponds to a perfect matching of $\Gamma\backslash Z$ making a nonzero contribution to $\PerfMatch(\Gamma\backslash Z)$. But the only such perfect matching does not contain the edge between left node $j$ to and left node $k$. Thus, the contribution of this matching and that of $M$ are both equal to 1.

	If $\underline{\Gamma}$ and $\underline{\Gamma'}$ disagree on entry $(1^{K}\oplus e_j\oplus e_k, 1^{K})$, modify $\Gamma$ by multiplying the weight of the edge connecting left nodes $j$ and $k$ by $-1$, and take the desired matchgate to be the corresponding $\Gamma'$.
\end{proof}

\begin{lem}
	Let $\Gamma$ be the $K$-input, $K$-output transducer shown in Figure~\ref{fig:3} with signature $L_4$. There exists a planar matchgate whose standard signature agrees with $\underline{\Gamma}$ everywhere up to sign.
	\label{lem:crossover2}
\end{lem}

\begin{proof}
	Take the desired matchgate to be $\Gamma'$. As in the proof of Lemma~\ref{lem:crossover1}, every subgraph of $\Gamma$ has at most one perfect matching, so we already know $\underline{\Gamma}$ and $\underline{\Gamma'}$ agree everywhere up to sign. 
\end{proof}

\begin{lem}
	Let $\Gamma$ be the $K$-input, $K$-output transducer shown in Figure~\ref{fig:4} with signature $L^{j}_5$. There exists a planar matchgate whose standard signature agrees with $\underline{\Gamma}$ on the main diagonal entries and entry $(1^{K}\oplus e_j, 1^{K}\oplus e_{q_{i+1}})$, and agrees with $\underline{\Gamma}$ everywhere else up to sign.
	\label{lem:crossover3}
\end{lem}

\begin{proof}
	Take the desired matchgate to be $\Gamma'$. As in the proof of Lemma~\ref{lem:crossover1}, every subgraph of $\Gamma$ has at most one perfect matching and $\underline{\Gamma}$ and $\underline{\Gamma'}$ agree on the main diagonal entries. If they disagree on entry $(1^{K}\oplus e_j, 1^{K}\oplus e_{q_{i+1}})$, modify $\Gamma$ by multiplying the weight of the edge connecting left node $j$ to right node $i+1$ by $-1$, and take the desired matchgate to be the corresponding $\Gamma'$.
\end{proof}

\end{document}